\begin{document}
\begin{frontmatter}
\title{Near-linear time approximation algorithms for optimal transport via Sinkhorn iteration}
\runtitle{Near-linear time optimal transport}
\author{\fnms{Jason}~\snm{Altschuler}\thanksref{t1}\ead[label=jason]{jasonalt@mit.edu}},
\author{\fnms{Jonathan}~\snm{Weed}\thanksref{t1}\ead[label=jon]{jweed@mit.edu}},
\and
\author{\fnms{Philippe}~\snm{Rigollet}\thanksref{t2}\ead[label=rigollet]{rigollet@math.mit.edu}}

\affiliation{Massachusetts Institute of Technology}
\thankstext{t1}{This work was supported in part by NSF Graduate Research Fellowship DGE-1122374.}
\thankstext{t2}{This work was supported in part by NSF CAREER DMS-1541099, NSF DMS-1541100, DARPA W911NF-16-1-0551, ONR N00014-17-1-2147 and a grant from the MIT NEC Corporation.}

\begin{abstract} 
Computing optimal transport distances such as the earth mover's distance is a fundamental problem in machine learning, statistics, and computer vision. Despite the recent introduction of several algorithms with good empirical performance, it is unknown whether general optimal transport distances can be approximated in near-linear time. This paper demonstrates that this ambitious goal is in fact achieved by Cuturi's \emph{Sinkhorn Distances}. This result relies on a new analysis of Sinkhorn iteration, which also directly suggests a new greedy coordinate descent algorithm, $\greedy${}, with the same theoretical guarantees. Numerical simulations  illustrate that $\greedy$ significantly outperforms the classical $\sinkhorn$ algorithm in practice.
\end{abstract}

\medskip
\begin{center}\textit{Dedicated to the memory of Michael B. Cohen}\end{center}
\vspace{-10pt}
\end{frontmatter}

\section{Introduction}\label{sec:intro}
Computing distances between probability measures on metric spaces, or more generally between point clouds, plays an increasingly preponderant role in machine learning~\cite{SanLin11, MueJaa15, LloGha15,JitSzaChw16,ArjChiBot17},  statistics~\cite{FlaCutCou16,PanZem16,SzeRiz04,BigGouKle17} and computer vision~\cite{RubTomGui00,BonPanPar11, SolGoePey15}. A prominent example of such distances is the \emph{earth mover's distance} introduced in~\cite{WerPelRos85} (see also \cite{RubTomGui00}), which is a special case of Wasserstein distance, or optimal transport (OT) distance~\cite{Vil09}. 

While OT distances exhibit a unique ability to capture geometric features of the objects at hand, they suffer from a heavy computational cost that had been prohibitive in large scale applications until the recent introduction to the machine learning community of \emph{Sinkhorn Distances} by Cuturi~\cite{Cut13}. Combined with other numerical tricks, these recent advances have enabled the treatment of large point clouds in computer graphics such as triangle meshes~\cite{SolGoePey15} and high-resolution neuroimaging data~\cite{GraPeyCut15}. Sinkhorn Distances rely on the idea of \emph{entropic penalization}, which has been implemented in similar problems at least since Schr\"odinger~\cite{Sch31, Leo14}. This powerful idea  has been successfully applied to a variety of  contexts not only as a statistical tool for model selection~\cite{JudRigTsy08, RigTsy11,RigTsy12} and online learning~\cite{CesLug06}, but also as an optimization gadget in first-order optimization methods such as mirror descent and proximal methods~\cite{Bub15}.

\medskip

{\bf Related work.} Computing an OT distance amounts to solving the following linear system:
\begin{equation}
\label{EQ:OT}
\min_{P \in \cU_{r,c}} \langle P, C \rangle\,, \qquad \quad  \cU_{r,c}:=\big\{P \in \R^{n\times n}_+\,:\, P\bone=r\,, P^\top \bone=c\big\}\,,
\end{equation}
where  $\bone$ is the all-ones vector in $\R^n$, $C \in \R_+^{n \times n}$ is a given \emph{cost matrix}, and $r \in \R^n, c \in \R^n$ are given vectors with positive entries that sum to one. Typically $C$ is a matrix containing pairwise distances (and is thus dense), but in this paper we allow $C$ to be an arbitrary non-negative dense matrix with bounded entries since our results are more general. For brevity, this paper focuses on square matrices $C$ and $P$, since extensions to the rectangular case are straightforward.

This paper is at the intersection of two lines of research: a theoretical one that aims at finding (near) linear time approximation algorithms for simple problems that are already known to run in polynomial time and a practical one that pursues fast algorithms for solving optimal transport approximately for large datasets.

Noticing that~\eqref{EQ:OT} is a linear program with $O(n)$ linear constraints and certain graphical structure, one can use the recent Lee-Sidford linear solver to find a solution in time $\tildo(n^{2.5})$~\cite{LeeSid14}, improving over the previous standard of $O(n^{3.5})$ \cite{Ren88}. While  no practical implementation of the Lee-Sidford algorithm is known, it provides a theoretical benchmark for our methods. Their result is part of a long line of work initiated by the seminal paper of Spielman and Teng~\cite{SpiTen04SDD} on solving linear systems of equations, which has provided a building block for near-linear time approximation algorithms in a variety of combinatorially structured linear problems. A separate line of work has focused on obtaining faster algorithms for~\eqref{EQ:OT} by imposing additional assumptions. For instance, ~\cite{AgaSha14} obtain approximations to~\eqref{EQ:OT} when the cost matrix $C$ arises from a metric, but their running times are not truly near-linear.~\cite{ShaAga12,AndNikOna14} develop even faster algorithms for~\eqref{EQ:OT}, but require $C$ to arise from a low-dimensional $\ell_p$ metric.

Practical algorithms for computing OT distances include Orlin's algorithm for the \emph{Uncapacitated Minimum Cost Flow} problem via a standard reduction. Like interior point methods, it has a provable complexity of $O(n^3\log n)$. This dependence on the dimension is also observed in practice, thereby preventing large-scale applications. To overcome the limitations of such general solvers, various ideas ranging from graph sparsification~\cite{PelWer09} to metric embedding~\cite{IndTha03,GraDar04, ShiJac08} have been proposed over the years to deal with particular cases of OT distance. 

Our work complements both  lines of work, theoretical and practical, by providing the first near-linear time guarantee to approximate~\eqref{EQ:OT} for general non-negative cost matrices. Moreover we show that this performance is achieved by algorithms that are also very efficient in practice.
Central to our contribution are recent developments of scalable methods for general OT that leverage the idea of entropic regularization~\cite{Cut13, BenCarCut15,  GenCutPey16}. However, the apparent practical efficacy of these approaches came without theoretical guarantees. In particular, showing that this regularization yields an algorithm to compute or approximate general OT distances in time nearly linear in the input size $n^2$ was an open question before this work.

\medskip

{\bf Our contribution.} 
The contribution of this paper is twofold. First we demonstrate that, with an appropriate choice of parameters, the algorithm for Sinkhorn Distances introduced in~\cite{Cut13} is in fact a \emph{near-linear time} approximation algorithm for computing OT distances between discrete measures. This is the first proof that such near-linear time results are achievable for optimal transport. We also provide previously unavailable guidance for parameter tuning in this algorithm. Core to our work is a new and arguably more natural analysis of the Sinkhorn iteration algorithm, which we show converges in a number of iterations independent of the dimension $n$ of the matrix to balance. In particular, this analysis directly suggests a greedy variant of Sinkhorn iteration that also provably runs in near-linear time and significantly outperforms the classical algorithm in practice. Finally, while most approximation algorithms output an approximation of the optimum \emph{value} of the linear program~\eqref{EQ:OT}, we also describe a simple, parallelizable rounding algorithm that provably outputs a feasible solution to~\eqref{EQ:OT}. Specifically, for any $\eps>0$ and bounded, non-negative cost matrix $C$, we describe an algorithm that runs in time $\tildo(n^2/\eps^3)$ and outputs $\hat P \in \cU_{r,c}$ such that 
$$
 \langle \hat P, C \rangle \le \min_{P \in \cU_{r,c}} \langle  P, C \rangle + \eps
$$
We emphasize that our analysis does not require the cost matrix $C$ to come from an underlying metric; we only require $C$ to be non-negative. This implies that our results also give, for example, near-linear time approximation algorithms for Wasserstein $p$-distances between discrete measures.

{\bf Notation.}
We denote non-negative real numbers by $\re_+$, the set of integers $\{1, \dots, n\}$ by $[n]$, and the $n$-dimensional simplex by $\Delta_n := \{x \in \re_+^n \; : \; \sum_{i=1}^n x_i = 1 \}$. For two probability distributions $p,q \in \Delta_n$ such that $p$ is absolutely continuous w.r.t. $q$, we define the entropy $H(p)$ of $p$ and the Kullback-Leibler divergence $\KL(p \| q)$ between $p$ and $q$ respectively by
$$
H(p)=\sum_{i=1}^n p_i \log\left( \frac{1}{p_i}\right)\,, \qquad \KL(p \| q) := \sum_{i=1}^n p_i \log \left(\frac{p_i}{q_i}\right)\,.
$$ 
Similarly, for a matrix $P \in \re_+^{n \times n}$, we define the entropy $H(P)$ entrywise as $\sum_{ij}P_{ij} \log \frac{1}{P_{ij}}$.
We use $\bone$ and $\0$ to denote the all-ones and all-zeroes vectors in $\re^n$. For a matrix $A=(A_{ij})$, we denote by $\exp(A)$ the matrix with entries $(e^{A_{ij}})$. For $  A \in \re^{n \times n}$, we denote its row and columns sums by $r(A) := A\bone \in \R^n$ and $c(A) := A^\top\bone \in \R^n$, respectively. The coordinates $r_i(A)$ and $c_j(A)$ denote the $i$th row sum and $j$th column sum of $A$, respectively. We write $\|A\|_\infty=\max_{ij}|A_{ij}|$ and $\|A\|_1 = \sum_{ij} |A_{ij}|$.  For two matrices of the same dimension, we denote the Frobenius inner product of $A$ and $B$ by $\langle A, B \rangle = \sum_{ij} A_{ij} B_{ij}$.
For a vector $x \in \re^n$, we write $\diag(x) \in \re^{n \times n}$ to denote the diagonal matrix with  entries $(\diag(x))_{ii} = x_i$.
For any two nonnegative sequences $(u_n)_n, (v_n)_n$, we write $u_n=\tildo(v_n)$ if there exist positive constants $C,c$ such that $u_n\le Cv_n (\log n)^c$. For any two real numbers, we write $a\wedge b=\min(a,b)$.

\section{Optimal Transport in near-linear time}\label{sec:ot}

In this section, we describe the main algorithm studied in this paper. Pseudocode appears in Algorithm~\ref{alg:ot}.

\begin{wrapfigure}[18]{R}{0.48\textwidth} 
\begin{minipage}[t]{\linewidth}
\begin{algorithm}[H]
\caption{$\ot$($C$, $r$, $c$, $\eps$)}
\label{alg:ot}
\begin{algorithmic}[1]
\STATEx $\eta \leftarrow \frac{4 \log n}{\eps}$, $\eps' \leftarrow \frac{\eps}{8 \Cinfty}$
\STATEx \textbackslash\textbackslash$\;$ Step 1: Approximately project onto $\U$ 
\STATE $A \leftarrow \exp(-\eta C)$
\STATE $B \leftarrow \approxproj(A, \U, \eps')$
\STATEx
\STATEx \textbackslash\textbackslash$\;$Step 2: Round to feasible point in $\cU_{r,c}$
\STATE Output $\hat P \leftarrow \roundtrans(B, \U)$
\end{algorithmic}
\end{algorithm}
\vspace{-.25cm}
\end{minipage}
\begin{minipage}[t]{\linewidth}
\begin{algorithm}[H]
\caption{$\roundtrans(F, \U)$}
 \label{alg:round}
\begin{algorithmic}[1]
\STATE $X \leftarrow \diag(x)$ with $x_{i} = \frac{r_i}{r_i(F)}\wedge  1$
\STATE $F' \leftarrow XF$
\STATE $Y \leftarrow \diag(y)$ with $y_{j} = \frac{c_j}{c_j(F')}\wedge  1$
\STATE $F'' \leftarrow F'Y$
\STATE $\errr \leftarrow r - r(F'')$, $\errc \leftarrow c - c(F'')$
\STATE Output $G \leftarrow F'' + \errr \errc^\top/\|\errr\|_1$
\end{algorithmic}
\end{algorithm}
\end{minipage}

\end{wrapfigure}

The core of our algorithm is the computation of an \emph{approximate Sinkhorn projection} of the matrix $A= \exp(-\eta C)$ (Step~1), details for which will be given in Section~\ref{sec:sinkhorn}. %
Since our approximate Sinkhorn projection is not guaranteed to lie in the feasible set, we round our approximation to ensure that it lies in $\U$ (Step~2). Pseudocode for a simple, parallelizable rounding procedure is given in Algorithm~\ref{alg:round}.

Algorithm~\ref{alg:ot} hinges on two subroutines: $\approxproj$ and $\roundtrans$.
We give two algorithms for \approxproj: $\sinkhorn$\ and $\greedy$.
We devote Section~\ref{sec:sinkhorn} to their analysis, which is of independent interest.
On the other hand, $\roundtrans$ is fairly simple. Its analysis is postponed to Section~\ref{sec:proof-thm}. 

Our main theorem about Algorithm~\ref{alg:ot} is the following accuracy and runtime guarantee. The proof is postponed to Section~\ref{sec:proof-thm}, since it relies on the analysis of \approxproj\ and \roundtrans.
\begin{thm}\label{thm:ot}
Algorithm~\ref{alg:ot} returns a point $\hat P \in \U$ satisfying
\begin{equation*}
\langle \hat P, C \rangle \leq \min_{P \in \U} \langle P, C \rangle + \eps
\end{equation*}
in time $O(n^2 + S)$, where $S$ is the running time of the subroutine $\approxproj(A, \U, \eps')$.
In particular, if $\|C\|_\infty \leq L$, then $S$ can be $O(n^2L^3 (\log n)\eps^{-3})$, so that Algorithm~\ref{alg:ot} runs in  $O(n^2 L^3 (\log n)\eps^{-3})$ time.
\end{thm}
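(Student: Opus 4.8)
The plan is to bound $\langle \hat P, C\rangle - \min_{P\in\U}\langle P, C\rangle$ by splitting it into three controlled pieces: the price of entropic regularization, the marginal error committed by $\approxproj$, and the $\ell_1$-displacement caused by $\roundtrans$. I will invoke two external guarantees: from Section~\ref{sec:sinkhorn}, that $B=\approxproj(A,\U,\eps')$ satisfies $\|r(B)-r\|_1+\|c(B)-c\|_1\le\eps'$ (and may be taken with $\|B\|_1=1$, e.g.\ by stopping Sinkhorn right after a rescaling step); and from the analysis of $\roundtrans$, that $\roundtrans(F,\cU_{r',c'})$ returns a point of $\cU_{r',c'}$ at $\ell_1$-distance at most $2\big(\|r(F)-r'\|_1+\|c(F)-c'\|_1\big)$ from $F$. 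The structural observation that glues these together is that $\approxproj$ only rescales rows and columns of $A=\exp(-\eta C)$, so $B=D_1AD_2$ for positive diagonal $D_1,D_2$; equivalently, since $\log A_{ij}=-\eta C_{ij}$ one checks $\langle P,C\rangle-\eta^{-1}H(P)=\eta^{-1}\sum_{ij}P_{ij}\log(P_{ij}/A_{ij})$, and hence on the polytope $\cU_{r(B),c(B)}$ (where $\sum_{ij}P_{ij}$ is fixed) the matrix $B$ is the \emph{exact} minimizer of $\langle\cdot,C\rangle-\eta^{-1}H(\cdot)$. This is what lets me compare $\langle B,C\rangle$ to the true optimum without any stability estimate for Sinkhorn projections.

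Now let $P^*$ minimize $\langle P,C\rangle$ over $\U$. Applying the rounding guarantee to $B$ and H\"older's inequality, $\langle\hat P,C\rangle\le\langle B,C\rangle+\|\hat P-B\|_1\Cinfty\le\langle B,C\rangle+2\eps'\Cinfty$. Next set $\tilde P:=\roundtrans(P^*,\cU_{r(B),c(B)})$; the rounding guarantee gives $\|\tilde P-P^*\|_1\le 2(\|r-r(B)\|_1+\|c-c(B)\|_1)\le 2\eps'$, so $\langle\tilde P,C\rangle\le\langle P^*,C\rangle+2\eps'\Cinfty$. Since $\tilde P\in\cU_{r(B),c(B)}$ and $B$ minimizes $\langle\cdot,C\rangle-\eta^{-1}H(\cdot)$ there, $\langle B,C\rangle-\eta^{-1}H(B)\le\langle\tilde P,C\rangle-\eta^{-1}H(\tilde P)$, hence $\langle B,C\rangle\le\langle\tilde P,C\rangle+\eta^{-1}H(B)$ once $H(\tilde P)\ge 0$; with $\|B\|_1=1$ both $\tilde P$ and $B$ are probability matrices, so $H(\tilde P)\ge 0$ and $H(B)\le\log(n^2)=2\log n$. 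Chaining, $\langle\hat P,C\rangle\le\langle P^*,C\rangle+4\eps'\Cinfty+2\eta^{-1}\log n$, which is $\langle P^*,C\rangle+\eps/2+\eps/2=\langle P^*,C\rangle+\eps$ for $\eps'=\eps/(8\Cinfty)$ and $\eta=4\log n/\eps$; and $\hat P\in\U$ by the rounding guarantee.

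For the running time, forming $A=\exp(-\eta C)$ costs $O(n^2)$ and $\roundtrans$ costs $O(n^2)$ (a bounded number of row/column rescalings plus one rank-one update), while $\approxproj$ costs $S$ by hypothesis, giving $O(n^2+S)$. For the explicit bound, the $\sinkhorn$ analysis of Section~\ref{sec:sinkhorn} bounds the number of iterations of $\approxproj(A,\U,\eps')$ by $O\big((\eta\Cinfty+\log n)(\eps')^{-2}\big)$, each iteration costing $O(n^2)$; substituting $\eta=4\log n/\eps$, $\eps'=\eps/(8\Cinfty)$, and $\Cinfty\le L$ gives $S=O(n^2L^3(\log n)\eps^{-3})$, hence the claimed overall time.

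I expect the main subtlety to be the structural observation in the first paragraph: recognizing $B$ as an exact entropic optimum for \emph{its own} marginals is what allows the two separate rounding steps to be attached to the regularization bound, and it is easy to overlook in favor of a harder direct comparison of $B$ with the ideal Sinkhorn projection of $A$ onto $\U$. The remaining delicacy is bookkeeping: ensuring the comparison matrices have unit total mass so that $0\le H(\cdot)\le 2\log n$, and verifying that the error budget $4\eps'\Cinfty+2\eta^{-1}\log n$ telescopes to exactly $\eps$ under the prescribed parameters. Everything else reduces to H\"older's inequality and the two cited subroutine guarantees.
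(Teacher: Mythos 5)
Your proposal is correct and follows essentially the same route as the paper: it uses the same key observation (via Lemma~\ref{lem:penalized_is_sinkhorn}, or your equivalent KL identity) that $B$ is the exact entropic optimizer over $\cU_{r(B),c(B)}$, compares it there to a rounding of $P^*$ obtained from Lemma~\ref{lem:roundtrans}, rounds $B$ back to $\U$, and balances $2\eta^{-1}\log n + 4\eps'\Cinfty = \eps$ with the prescribed parameters, with the identical runtime accounting. No gaps.
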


\begin{remark}
The time complexity in the above theorem reflects only elementary arithmetic operations. In the interest of clarity, we ignore questions of bit complexity that may arise from taking exponentials. The effect of this simplification is marginal since it can be easily shown~\cite{KalLarRic08} that the maximum bit complexity throughout the iterations of our algorithm is $O(L(\log n)/\eps)$. As a result, factoring in bit complexity leads to a runtime of $O(n^2 L^4 (\log n)^2\eps^{-4})$, which is still truly near-linear.
\end{remark}

\section{Linear-time approximate Sinkhorn projection}\label{sec:sinkhorn}
The core of our OT algorithm is the entropic penalty proposed by Cuturi~\cite{Cut13}:
\begin{equation}\label{eqn:peta}
\Peta := \argmin_{P \in \U} \big\{\langle P, C \rangle -  \eta^{-1} H(P)\big\}\,.
\end{equation}
The solution to~\eqref{eqn:peta} can be characterized explicitly by analyzing its first-order conditions for optimality.
\begin{lemma}{{\cite{Cut13}}}\label{lem:penalized_is_sinkhorn}
For any cost matrix $C$ and $r, c \in \Delta_n$, the minimization program~\eqref{eqn:peta}
has a unique minimum at $\Peta \in \U$ of the form $\Peta = X A Y$, where $A = \exp(-\eta C)$ and  $X, Y \in \R_+^{n \times n}$ are both  diagonal matrices. The matrices $(X,Y)$ are unique up to a constant factor.
\end{lemma}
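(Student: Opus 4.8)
The plan is to prove existence and uniqueness of $\Peta$ by a compactness-plus-strict-convexity argument, then show that the minimizer lies in the relative interior of $\U$, and finally read off the multiplicative structure $\Peta = XAY$ from the first-order optimality conditions.

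First I would observe that $\U$ is nonempty (it contains $rc^\top$, whose row sums are $r$ and column sums are $c$), closed, and bounded (every entry lies in $[0,1]$), hence compact; and that $P \mapsto \langle P, C\rangle - \eta^{-1}H(P)$ is continuous on $\U$ under the convention $0\log 0 = 0$. Therefore a minimizer $\Peta$ exists. Since $-H$ is strictly convex on the simplex and $\langle \cdot, C\rangle$ is linear, the objective is strictly convex on the convex set $\U$, so $\Peta$ is the unique minimizer.

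The crucial step is to show $\Peta$ has all entries strictly positive. Suppose $(\Peta)_{ij} = 0$ for some $i,j$. Since $r_i = \sum_\ell (\Peta)_{i\ell} > 0$ and $c_j = \sum_k (\Peta)_{kj} > 0$, there exist $\ell \neq j$ and $k \neq i$ with $(\Peta)_{i\ell} > 0$ and $(\Peta)_{kj} > 0$; the four cells $(i,j),(i,\ell),(k,j),(k,\ell)$ are distinct. For small $\delta > 0$, perturb $\Peta$ by adding $\delta$ to entries $(i,j)$ and $(k,\ell)$ and subtracting $\delta$ from entries $(i,\ell)$ and $(k,j)$; this leaves all row and column sums unchanged and keeps the matrix nonnegative once $\delta \le \min\{(\Peta)_{i\ell},(\Peta)_{kj}\}$, so it stays in $\U$. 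The linear term $\langle \cdot, C\rangle$ changes by $\delta\,(C_{ij}+C_{k\ell}-C_{i\ell}-C_{kj}) = O(\delta)$. In the entropic term, writing $g(t) = \eta^{-1}t\log t$, the newly activated cell $(i,j)$ contributes $g(\delta) = \eta^{-1}\delta\log\delta$, while each of the other three cells changes $g$ by $O(\delta)$ (for $(i,\ell)$ and $(k,j)$ because $g'$ is bounded near a strictly positive base point; for $(k,\ell)$, if its base value is $0$ the contribution $g(\delta)$ is itself negative, which only helps). Since $\frac{d}{d\delta}[\delta\log\delta] = \log\delta + 1 \to -\infty$ as $\delta\to 0^+$, the term $\eta^{-1}\delta\log\delta$ dominates the $O(\delta)$ terms and the objective strictly decreases, contradicting optimality. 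Hence $\Peta$ lies in the relative interior of $\U$.

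Finally, since $\Peta$ is an interior minimizer of a differentiable convex function subject only to the affine constraints $P\bone = r$, $P^\top\bone = c$, the Lagrange conditions hold: there are multipliers $\alpha,\beta\in\R^n$ with
\[
C_{ij} + \eta^{-1}\bigl(\log (\Peta)_{ij} + 1\bigr) = \alpha_i + \beta_j \qquad \text{for all } i,j.
\]
Exponentiating and splitting the constant $-1$ symmetrically yields $(\Peta)_{ij} = x_i A_{ij} y_j$ with $A = \exp(-\eta C)$, $x_i := e^{\eta\alpha_i - 1/2} > 0$, $y_j := e^{\eta\beta_j - 1/2} > 0$; that is, $\Peta = XAY$ with $X = \diag(x)$, $Y = \diag(y)$ positive diagonal. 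For the uniqueness of the scaling, if $XAY = X'A Y'$ with all diagonal entries positive, then since every entry of $A$ is strictly positive, $x_i/x_i' = y_j'/y_j$ for all $i,j$, forcing both ratios to equal a common constant $\lambda > 0$, so $(X',Y') = (\lambda X, \lambda^{-1}Y)$. The main obstacle is the interiority step: on the boundary $\log(\Peta)_{ij} = -\infty$, so without first ruling out zero entries one cannot dualize only the equality constraints and the clean product form would break down; the cycle-perturbation argument exploiting the infinite slope of $t\log t$ at the origin is the part that requires care.
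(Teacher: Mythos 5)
Your proof is correct, and it follows essentially the same route the paper has in mind: the paper does not prove this lemma itself but cites Cuturi~\cite{Cut13} and notes that the characterization comes from strict convexity plus the first-order optimality conditions, which is exactly your argument. The one genuine addition is your cycle-perturbation step showing $\Peta$ has strictly positive entries before dualizing only the affine constraints --- a point the cited proof glosses over but which, as you note, is needed to make the Lagrangian computation with $\log (\Peta)_{ij}$ legitimate.
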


We call the matrix $\Peta$ appearing in Lemma~\ref{lem:penalized_is_sinkhorn} the \emph{Sinkhorn projection} of $A$, denoted $\sink(A, \U)$, after Sinkhorn, who proved uniqueness in~\cite{sinkhorn}.
Computing $\sink(A, \U)$ exactly is impractical, so we implement instead an approximate version $\approxproj(A, \U, \eps')$, which outputs a matrix $B=X A Y$ that may not lie in $\U$ but satisfies the condition $\|r(B) - r\|_1 + \|c(B) - c\|_1 \leq \eps'$.
We stress that this condition is very natural from a statistical standpoint, since it requires that $r(B)$ and $c(B)$ are close to the target marginals $r$ and $c$ in \emph{total variation distance}.

\subsection{The classical Sinkhorn algorithm}\label{subsec:review-sinkhorn}
Given a matrix $A$, Sinkhorn proposed a simple iterative algorithm to approximate the Sinkhorn projection $\sink(A, \U)$, which is now known as the Sinkhorn-Knopp algorithm or RAS method.
Despite the simplicity of this algorithm and its good performance in practice, it has been difficult to analyze.
As a result, recent work showing that $\sink(A, \U)$ can be approximated in near-linear time~\cite{sinkhorn-wigderson,CohMadTsi17} has bypassed the Sinkhorn-Knopp algorithm entirely.\footnote{Replacing the $\approxproj$ step in Algorithm~\ref{alg:ot} with the matrix-scaling algorithm developed in~\cite{CohMadTsi17} results in a runtime that is a single factor of $\eps$ faster than what we present in Theorem~\ref{thm:ot}. The benefit of our approach is that it is extremely easy to implement, whereas the matrix-scaling algorithm of~\cite{CohMadTsi17} relies heavily on near-linear time Laplacian solver subroutines, which are not implementable in practice.}
In our work, we obtain a new analysis of the simple and practical Sinkhorn-Knopp algorithm, showing that it also approximates $\sink(A, \U)$ in near-linear time.

\begin{wrapfigure}[14]{R}{0.48\textwidth}
\begin{minipage}[t]{\linewidth}
\vspace{-.4cm}
\begin{algorithm}[H]
\caption{$\sinkhorn(A, \U, \eps')$} \label{alg:sinkhorn}
\begin{algorithmic}[1]
\STATE Initialize $k \leftarrow 0$
\STATE $A^{(0)} \leftarrow A/\|A\|_1$, $x^0 \leftarrow \0$, $y^0 \leftarrow \0$
\WHILE{$\dist(A^{(k)},\cU_{r, c}) > \eps'$}
\STATE $k \leftarrow k+ 1$
\IF{$k$ odd} 
\STATE $x_i \leftarrow \log \frac{r_i}{r_i(A^{(k-1)})}$ for $i \in [n]$
\STATE $x^k \leftarrow x^{k-1} + x$, $y^k \leftarrow y^{k-1}$
\ELSE \STATE $y \leftarrow \log \frac{c_j}{c_j(A^{(k-1)})}$ for $j \in [n]$
\STATE $y^{k} \leftarrow y^{k-1} + y$, $x^{k} \leftarrow x^{k-1}$
\ENDIF
\STATE $A^{(k)} = \diag(\exp(x^{k})) A \diag(\exp(y^{k}))$
\ENDWHILE
\STATE Output $B \leftarrow A^{(k)}$
\end{algorithmic}
\end{algorithm}
\end{minipage}
\end{wrapfigure}

Pseudocode for the Sinkhorn-Knopp algorithm appears in Algorithm~\ref{alg:sinkhorn}.
In brief, it is an alternating projection procedure which renormalizes the rows and columns of $A$ in turn so that they match the desired row and column marginals $r$ and $c$.
At each step, it prescribes to either modify all the rows by multiplying row $i$ by $r_i/r_i(A)$ for $i \in [n]$, or to do the analogous operation on the columns.
(We interpret the quantity $0/0$ as $1$ in this algorithm if ever it occurs.)
The algorithm terminates when the matrix $A^{(k)}$ is sufficiently close to the polytope $\U$.

\subsection{Prior work}
Before this work, the best analysis of Algorithm~\ref{alg:sinkhorn} showed that $\tildo((\eps')^{-2})$ iterations suffice to obtain a matrix close to $\U$ in $\ell_2$ distance:
\begin{prop}{{\cite{KalLarRic08}}}\label{prop:sinkhorn_old}
Let $A$ be a strictly positive matrix. Algorithm~\ref{alg:sinkhorn} with $\dist(A,\cU_{r, c}) = \|r(A) - r\|_2 + \|c(A) - c\|_2$ outputs a matrix $B$ satisfying $\|r(B) - r\|_2 + \|c(B) - c\|_2 \leq \eps'$ in $O\big(\rho(\eps')^{-2} \log (s/\ell)\big)$ iterations, where $s = \sum_{ij} A_{ij}$, $\ell = \min_{ij} A_{ij}$, and  $\rho > 0$ is such that $r_i, c_i \leq \rho$ for all $i \in [n]$.
\end{prop}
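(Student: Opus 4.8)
The plan is to view Algorithm~\ref{alg:sinkhorn} as exact alternating minimization of a single convex potential and to control how much that potential can decrease. Set $M := A/\|A\|_1$ and define, for $(x,y)\in\R^n\times\R^n$,
\[
\Phi(x,y) \;:=\; \sum_{ij} M_{ij}\, e^{x_i + y_j} \;-\; \langle r, x\rangle \;-\; \langle c, y\rangle .
\]
Then $\Phi$ is convex, $\partial_{x_i}\Phi(x,y) = r_i\big(\diag(e^x)\,M\,\diag(e^y)\big) - r_i$, and symmetrically in $y$; hence the odd (row) steps are exactly $x^k\in\argmin_x\Phi(x,y^{k-1})$, the even (column) steps are exactly $y^k\in\argmin_y\Phi(x^{k-1},y)$, and at such a partial minimizer the corresponding marginal of $A^{(k)}=\diag(e^{x^k})\,M\,\diag(e^{y^k})$ equals its prescribed value. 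Since $x^0=y^0=\0$ we have $\Phi(x^0,y^0)=\|M\|_1=1$, and the whole argument reduces to showing that the number of iterations times a per-step decrease cannot exceed the total available drop of $\Phi$.

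First I would bound the total drop. By Lemma~\ref{lem:penalized_is_sinkhorn} (Sinkhorn's theorem applied to the strictly positive matrix $M$), $\Phi$ has a minimizer $(x^*,y^*)$ with $\diag(e^{x^*})\,M\,\diag(e^{y^*})=\sink(A,\U)\in\U$. The potential is invariant under the gauge shift $(x,y)\mapsto(x+t\bone,\,y-t\bone)$, because $\langle r,\bone\rangle=\langle c,\bone\rangle=1$, so I may normalize $\max_j y^*_j=0$, i.e.\ $y^*\le\0$. Picking $j^\star$ with $y^*_{j^\star}=0$, for each $i$ one has $M_{ij^\star}e^{x^*_i}=\big(\sink(A,\U)\big)_{ij^\star}\le 1$, hence $x^*_i\le\log\!\big(1/\min_{ij}M_{ij}\big)=\log(s/\ell)$. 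Since $\|\sink(A,\U)\|_1=1$, $\min\Phi=\Phi(x^*,y^*)=1-\langle r,x^*\rangle-\langle c,y^*\rangle$ with $\langle c,y^*\rangle\le0$, so
\[
\Phi(x^0,y^0)-\min\Phi \;=\; \langle r,x^*\rangle+\langle c,y^*\rangle \;\le\; \langle r,x^*\rangle \;\le\; \log(s/\ell).
\]

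Next I would lower-bound the decrease of $\Phi$ in one iteration. A short computation shows that a row update applied to a unit-mass matrix $M'$ decreases $\Phi$ by exactly $\KL(r\,\|\,r(M'))$ — the cross term $\sum_i(r_i(M')-r_i)$ vanishes since both vectors lie in $\Delta_n$ — and symmetrically a column update decreases $\Phi$ by $\KL(c\,\|\,c(M'))$. Pinsker's inequality turns this into $\ell_2$-progress, $\KL(p\|q)\ge\tfrac12\|p-q\|_1^2\ge\tfrac12\|p-q\|_2^2$, and a weighted refinement $\KL(p\|q)\ge\tfrac12\sum_i(p_i-q_i)^2/\max(p_i,q_i)$ upgrades the constant to order $1/\rho$ once the sizes of the iterate marginals are controlled. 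The structural point is that immediately after any update exactly one of the two marginals of $A^{(k)}$ is correct, so $\dist(A^{(k)},\U)$ equals the $\ell_2$-error of the \emph{other} marginal; hence whenever the stopping test $\dist(A^{(k-1)},\U)>\eps'$ fires for $k\ge2$, iteration $k$ corrects a marginal that is $\eps'$-far in $\ell_2$ and so decreases $\Phi$ by $\Omega\big((\eps')^2/\rho\big)$. Summing over $k=2,\dots,N$ and using $\Phi(x^N,y^N)\ge\min\Phi$ and $\Phi(x^1,y^1)\le\Phi(x^0,y^0)$ gives $(N-1)\cdot\Omega\big((\eps')^2/\rho\big)\le\log(s/\ell)$, i.e.\ $N=O\big(\rho(\eps')^{-2}\log(s/\ell)\big)$.

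The step I expect to be the main obstacle is this per-iteration bound, specifically extracting the stated $\rho$-dependence: plain Pinsker already yields the (still near-linear but $\rho$-free) estimate $O((\eps')^{-2}\log(s/\ell))$, whereas matching the proposition requires the weighted inequality above \emph{together with} showing that the iterate marginals never grow much beyond $\rho$, so that the denominator $\max(p_i,q_i)$ can legitimately be replaced by $\rho$. The remaining ingredients — that strict positivity of $M$ keeps every $A^{(k)}$ strictly positive, so all the KL divergences and the bound on $x^*$ are finite, and that a minimizer of $\Phi$ exists modulo the gauge — are routine given Lemma~\ref{lem:penalized_is_sinkhorn}.
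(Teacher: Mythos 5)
Your potential-function framework is sound, and in fact it is exactly the machinery this paper uses for its \emph{own} result (Lemmas~\ref{lem:improvement}--\ref{lem:pinsker} and Theorem~\ref{thm:sinkhorn}); note that the paper never proves Proposition~\ref{prop:sinkhorn_old} itself but imports it from \cite{KalLarRic08}. The parts of your argument that are complete --- the exact per-step decrease $\KL(r\,\|\,r(A^{(k-1)}))+\KL(c\,\|\,c(A^{(k-1)}))$, the bound $f(x^1,y^1)-\min f\le \log(s/\ell)$, and plain Pinsker --- give the $\rho$-free bound $O\big((\eps')^{-2}\log(s/\ell)\big)$, which is precisely Theorem~\ref{thm:sinkhorn} (even in $\ell_1$, hence a fortiori for the $\ell_2$ stopping rule). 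But that is \emph{weaker} than the statement you were asked to prove whenever $\rho\ll 1$: Proposition~\ref{prop:sinkhorn_old} claims $O\big(\rho(\eps')^{-2}\log(s/\ell)\big)$, and the step you defer --- a per-iteration decrease of order $(\eps')^2/\rho$ --- is the entire content of the $\rho$-dependence, so the proposal does not establish the proposition as stated.

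Moreover, the route you sketch for that step cannot be completed as written. The weighted inequality $\KL(p\|q)\ge\tfrac12\sum_i(p_i-q_i)^2/\max(p_i,q_i)$ is fine, but the denominator involves the \emph{current} marginal $q$ of the iterate, and there is no lemma (and none can hold in general) asserting that these stay $O(\rho)$: only the marginal fixed by the previous update is controlled, while the other one can concentrate --- e.g.\ after a column update essentially all of the unit mass can sit in one row, giving a row sum of order $1$ even when $\rho=1/n$. In that regime the desired inequality $\KL(r\|q)\gtrsim\|r-q\|_2^2/\rho$ is simply false: take $r$ uniform on $[n]$ and $q=r+\Delta e_1-\tfrac{\Delta}{n-1}(\bone-e_1)$ with $\rho=1/n\ll\Delta$; then $\KL(r\|q)\approx\Delta$ while $\|r-q\|_2^2/\rho\approx n\Delta^2\gg\Delta$. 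So when $\eps'$ exceeds $\rho$ and the error concentrates on an overshooting coordinate, the KL progress per iteration can be $\Theta(\eps')$ rather than $\Theta((\eps')^2/\rho)$, and no refinement of Pinsker alone can rescue the claimed rate; one would need either a genuinely different accounting (as in \cite{KalLarRic08}) or a structural control of the trajectory that, as the example shows, does not hold unconditionally. In short: what you have is a correct and clean proof of the dimension-free bound of Theorem~\ref{thm:sinkhorn} specialized to the $\ell_2$ criterion, but not a proof of Proposition~\ref{prop:sinkhorn_old}, and the flagged ``main obstacle'' is a real gap rather than a routine verification.
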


Unfortunately, this analysis is not strong enough to obtain a true near-linear time guarantee.
Indeed, the $\ell_2$ norm is not an appropriate measure of closeness between probability vectors, since very different distributions on large alphabets can nevertheless have small $\ell_2$ distance: for example, $(n^{-1}, \dots, n^{-1}, 0, \dots, 0)$ and $( 0, \dots, 0, n^{-1}, \dots, n^{-1})$ in $\Delta_{2n}$ have $\ell_2$ distance $\sqrt{2/n}$ even though they have disjoint support.
As noted above, for statistical problems, including computation of the OT distance, it is  more natural to measure distance in $\ell_1$ norm.

The following Corollary gives the best $\ell_1$ guarantee available from Proposition~\ref{prop:sinkhorn_old}.
\begin{corol}\label{cor:sinkhorn_old_l1}
Algorithm~\ref{alg:sinkhorn} with $\dist(A,\cU_{r, c}) = \|r(A) - r\|_2 + \|c(A) - c\|_2$ outputs a matrix $B$ satisfying $\|r(B) - r\|_1 + \|c(B) - c\|_1 \leq \eps'$ in $O\big(n\rho(\eps')^{-2} \log (s/\ell)\big)$ iterations.
\end{corol}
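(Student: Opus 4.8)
The plan is to obtain Corollary~\ref{cor:sinkhorn_old_l1} as an immediate consequence of Proposition~\ref{prop:sinkhorn_old} together with the elementary norm comparison $\|v\|_1 \le \sqrt{n}\,\|v\|_2$, valid for every $v \in \re^n$ by Cauchy--Schwarz. Concretely, I would run Algorithm~\ref{alg:sinkhorn} with the stated $\ell_2$ stopping criterion but with target accuracy $\eps'/\sqrt n$ in place of $\eps'$, and then translate the resulting $\ell_2$ guarantee into an $\ell_1$ guarantee.

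\textbf{Step 1 (invoke Proposition~\ref{prop:sinkhorn_old} at a tighter scale).} Apply Proposition~\ref{prop:sinkhorn_old} with $\eps'$ replaced by $\eps'/\sqrt n$. Since $A$ is strictly positive, the hypotheses are met, and the proposition guarantees that after
\[
O\!\bigl(\rho (\eps'/\sqrt n)^{-2} \log(s/\ell)\bigr) = O\!\bigl(n\rho(\eps')^{-2} \log(s/\ell)\bigr)
\]
iterations the algorithm outputs $B$ with $\|r(B) - r\|_2 + \|c(B) - c\|_2 \le \eps'/\sqrt n$. Here the $\sqrt n$ factor in the accuracy becomes a factor $n$ in the iteration count precisely because the bound of Proposition~\ref{prop:sinkhorn_old} depends on the inverse square of the target accuracy.

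\textbf{Step 2 (pass from $\ell_2$ to $\ell_1$).} Apply Cauchy--Schwarz to each of the $n$-dimensional error vectors $r(B)-r$ and $c(B)-c$:
\[
\|r(B) - r\|_1 + \|c(B) - c\|_1 \le \sqrt n\bigl(\|r(B) - r\|_2 + \|c(B) - c\|_2\bigr) \le \eps'.
\]
This is the claimed $\ell_1$ accuracy, attained within the iteration bound produced in Step 1, completing the proof.

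There is no genuine obstacle here: the argument is one application of Cauchy--Schwarz plus the bookkeeping of the substitution $\eps' \mapsto \eps'/\sqrt n$. The reason for recording the corollary is exactly to make the spurious dimension factor $n$ explicit when the guarantee is stated in the statistically natural $\ell_1$ metric, which is what motivates the sharper, dimension-free analysis of Sinkhorn iteration developed in the rest of Section~\ref{sec:sinkhorn}.
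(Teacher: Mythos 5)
Your proof is correct and is exactly the argument the paper intends: the paper gives no explicit proof of Corollary~\ref{cor:sinkhorn_old_l1}, treating it as immediate from Proposition~\ref{prop:sinkhorn_old} via the norm comparison $\|v\|_1 \le \sqrt{n}\,\|v\|_2$, which forces the target accuracy $\eps'/\sqrt{n}$ and hence the extra factor of $n$ in the iteration count. Nothing is missing.
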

The extra factor of $n$ in the runtime of Corollary~\ref{cor:sinkhorn_old_l1} is the price to pay to convert an $\ell_2$ bound to an $\ell_1$ bound.
Note that $\rho \geq 1/n$, so $n\rho$ is always larger than $1$.
If $r=c=\bone_n/n$ are uniform distributions, then $n\rho=1$ and no dependence on the dimension appears.
However, in the extreme where $r$ or $c$ contains an entry of constant size, we get $n \rho = \Omega(n)$.

\subsection{New analysis of the Sinkhorn algorithm}\label{sec:analysis-sinkhorn}
Our new analysis allows us to obtain a dimension-independent bound on the number of iterations beyond the uniform case.
\begin{thm}\label{thm:sinkhorn}
Algorithm~\ref{alg:sinkhorn} with $\dist(A, \U) = \|r(A) - r\|_1 + \|c(A) - c\|_1$ outputs a matrix $B$ satisfying $\|r(B) - r\|_1 + \|c(B) - c\|_1 \leq \eps'$
in $O\big((\eps')^{-2} \log (s/\ell)\big)$
iterations, where $s = \sum_{ij} A_{ij}$ and $\ell = \min_{ij} A_{ij}$.
\end{thm}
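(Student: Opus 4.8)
The plan is to recognize the Sinkhorn iterations as exact block coordinate descent on a convex potential, to compute that potential's decrease at each step exactly, and to balance a lower bound on this decrease (valid as long as the algorithm has not terminated) against an upper bound on the total decrease. We may assume $\|A\|_1 = 1$: the algorithm normalizes $A$ in its first line, and the ratio $s/\ell$ is invariant under scaling $A$. For $x, y \in \re^n$ set
\[
f(x,y) := \sum_{ij} A_{ij}\, e^{x_i + y_j} - \langle r, x\rangle - \langle c, y\rangle,
\]
a convex function which, up to an additive constant, is the Lagrangian dual objective of the entropic program~\eqref{eqn:peta}. With $B(x,y) := \diag(e^x)\,A\,\diag(e^y)$ one has $\partial f/\partial x_i = r_i(B(x,y)) - r_i$ and $\partial f/\partial y_j = c_j(B(x,y)) - c_j$, so minimizing $f$ over $x$ with $y$ fixed reproduces exactly the odd (row) update of Algorithm~\ref{alg:sinkhorn} and minimizing over $y$ with $x$ fixed reproduces the even (column) update; thus $x^k, y^k$ are the successive block coordinate minimizers of $f$ and $f(x^k,y^k)$ is nonincreasing. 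Since $x^0 = y^0 = \0$, $f(x^0,y^0) = \|A\|_1 = 1$. Finally, by Lemma~\ref{lem:penalized_is_sinkhorn} there exist $x^*,y^*$ with $P^* := B(x^*,y^*) = \sink(A,\U) \in \U$; this is a critical point of the convex function $f$, hence a global minimizer.

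Next I would compute the per-step decrease. First note that every iterate has $\|B(x^k,y^k)\|_1 = 1$: true at $k=0$ by the normalization, and after an odd step the row marginal equals $r$ while after an even step the column marginal equals $c$, each summing to one. A short computation then shows that the odd step at iteration $k$ decreases $f$ by exactly $\KL(r \| r(B^{(k-1)}))$ and the even step by exactly $\KL(c \| c(B^{(k-1)}))$; because $\|B^{(k)}\|_1 = 1$, the marginals $r(B^{(k)}),c(B^{(k)})$ lie in $\Delta_n$, so these are genuine (hence nonnegative) Kullback--Leibler divergences, consistent with monotonicity.

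Now for the two-sided bound on the total decrease. Suppose the algorithm performs $N$ iterations. For each $2 \le m \le N$ the loop did not exit after iteration $m-1$, so $\dist(B^{(m-1)},\U) > \eps'$; but iteration $m-1$ set either $r(B^{(m-1)}) = r$ or $c(B^{(m-1)}) = c$ exactly (according to the parity of $m-1$), forcing the \emph{other} marginal discrepancy to exceed $\eps'$, and by Pinsker's inequality iteration $m$ therefore decreases $f$ by at least $(\eps')^2/2$. Summing over $m$ gives $f(x^0,y^0) - f(x^N,y^N) \ge (N-1)(\eps')^2/2$. On the other hand $(x^*,y^*)$ minimizes $f$, so
\[
f(x^0,y^0) - f(x^N,y^N) \;\le\; f(\0,\0) - f(x^*,y^*) \;=\; \KL(P^* \| A) \;\le\; \log(s/\ell);
\]
here the middle identity follows from $f(\0,\0) = \|A\|_1 = 1$, $\|P^*\|_1 = 1$, the constraints $r_i = \sum_j P^*_{ij}$ and $c_j = \sum_i P^*_{ij}$, and $P^*_{ij} = A_{ij} e^{x^*_i + y^*_j}$, which together give $f(\0,\0) - f(x^*,y^*) = \langle r,x^*\rangle + \langle c,y^*\rangle = \sum_{ij} P^*_{ij}\log(P^*_{ij}/A_{ij})$; the last inequality holds because $\KL(P^*\|A) = -H(P^*) + \sum_{ij} P^*_{ij}\log(1/A_{ij}) \le 0 + \log(1/\ell)$, using $H(P^*)\ge 0$ and $A_{ij}\ge \ell$. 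Combining, $(N-1)(\eps')^2/2 \le \log(s/\ell)$, i.e.\ $N = O\big((\eps')^{-2}\log(s/\ell)\big)$.

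The per-step decrease identity, the rewriting $f(\0,\0) - \min f = \KL(P^* \| A)$, and the estimate $\KL(P^* \| A) \le \log(s/\ell)$ are all short calculations. The conceptual crux --- and the reason this bound, unlike Proposition~\ref{prop:sinkhorn_old} and Corollary~\ref{cor:sinkhorn_old_l1}, carries no spurious factor of $n$ --- is the observation in the second step: the natural progress measure for Sinkhorn is the \emph{KL divergence} between the target marginal and the current marginal, which is precisely the per-iteration decrease of $f$; feeding it into Pinsker converts an $\ell_1$ marginal violation of size $\eps'$ into a decrease of order $(\eps')^2$ with no dimension dependence. The points needing care are the bookkeeping that $\|B^{(k)}\|_1 = 1$ throughout (so that every KL divergence appearing is between bona fide probability vectors and Pinsker applies) and the off-by-one in identifying which iterations are guaranteed to make $(\eps')^2/2$ progress.
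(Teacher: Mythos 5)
Your proof is correct and follows essentially the same route as the paper: the same potential $f$, the same exact per-iteration decrease equal to a KL divergence between target and current marginals, Pinsker's inequality to lower-bound the decrease while the stopping criterion fails, and a $\log(s/\ell)$ bound on the total possible decrease via the optimal scaling vectors. Your minor variations (using the parity of the previous step to get a $(\eps')^2/2$ per-step bound, and phrasing $f(\0,\0)-\min f$ as $\KL(P^*\|A)$) are cosmetic refinements of the paper's Lemmas~\ref{lem:improvement} and~\ref{lem:starting}.
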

Comparing our result with Corollary~\ref{cor:sinkhorn_old_l1}, we see what our bound is always stronger, by up to a factor of $n$.
Moreover, our analysis is extremely short.
Our improved results and simplified proof follow directly from the fact that we carry out the analysis entirely with respect to the Kullback-Leibler divergence, a common measure of statistical distance.
This measure possesses a close connection to the total-variation distance via Pinsker's inequality (Lemma~\ref{lem:pinsker}, below), from which we obtain the desired $\ell_1$ bound. Similar ideas can be traced back at least to~\cite{GurYia98} where an analysis of Sinkhorn iterations for bistochastic targets is sketched in the context of a different problem: detecting the existence of a perfect matching in a bipartite graph.

We first define some notation.
Given a matrix $A$ and desired row and column sums $r$ and $c$, we define the potential (Lyapunov) function $f: \re^n \times \re^n \to \re$ by
\begin{equation*}
f(x, y) = \sum_{ij} A_{ij} e^{x_i + y_j} - \langle r, x \rangle - \langle c, y \rangle\,.
\end{equation*}
This auxiliary function has appeared in much of the literature on Sinkhorn projections~\cite{KalLarRic08,CohMadTsi17,KalKha96,KalKha93}.
We call the vectors $x$ and $y$ \emph{scaling vectors}. It is easy to check that a minimizer $(x^*, y^*)$ of $f$ yields the Sinkhorn projection of $A$: writing $X = \diag(\exp(x^*))$ and $Y = \diag(\exp(y^*))$, first order optimality conditions imply that $X A Y$ lies in $\U$, and therefore $X A Y = \sink(A, \U)$.
\par The following lemma exactly characterizes the improvement in the potential function $f$ from an iteration of Sinkhorn, in terms of our current divergence to the target marginals.

\begin{lemma}\label{lem:improvement}
If $k \geq 2$, then $\DS
f(x^{k-1}, y^{k-1}) - f(x^{k}, y^{k}) = \KL(r \| r(A^{(k-1)})) + \KL(c \| c(A^{(k-1)}))\,.$
\end{lemma}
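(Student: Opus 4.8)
\textbf{Proof plan for Lemma~\ref{lem:improvement}.}

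The plan is to track the change in $f$ over a single Sinkhorn update and observe that the update performs an \emph{exact} coordinate-block minimization of $f$, so that the decrease has a clean closed form. First I would recall the key structural fact: at step $k$ the algorithm modifies only one of the two scaling blocks. Suppose $k$ is odd, so that $y^k = y^{k-1}$ and $x^k = x^{k-1} + x$ with $x_i = \log(r_i / r_i(A^{(k-1)}))$. I would first verify that this choice of $x$ is exactly the minimizer of the convex function $x \mapsto f(x^{k-1} + x, y^{k-1})$: differentiating in $x_i$ gives the stationarity condition $\sum_j A_{ij} e^{x^{k-1}_i + x_i + y^{k-1}_j} = r_i$, i.e. $e^{x_i} \cdot r_i(A^{(k-1)}) = r_i$, which is precisely the update rule. (Here I am using that $A^{(k-1)} = \diag(\exp(x^{k-1})) A \diag(\exp(y^{k-1}))$, so $r_i(A^{(k-1)}) = \sum_j A_{ij} e^{x^{k-1}_i + y^{k-1}_j}$.)

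Next I would just compute $f(x^{k-1}, y^{k-1}) - f(x^k, y^k)$ directly. The bilinear-exponential term $\sum_{ij} A_{ij} e^{x_i + y_j}$ evaluates to $\|A^{(k-1)}\|_1 = 1$ at the old point (since the algorithm normalizes $A^{(0)}$ and each subsequent $A^{(k)}$ has the property that after a row update its row sums are $r$, hence sum to $1$; more robustly, one checks $\sum_{ij} A^{(k-1)}_{ij}$ is always $1$ because the previous step matched one set of marginals), and at the new point $\sum_{ij} A_{ij} e^{x^k_i + y^k_j} = \sum_i e^{x_i} r_i(A^{(k-1)}) = \sum_i r_i = 1$ as well. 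So the exponential terms cancel. The linear terms contribute $-\langle r, x^{k-1}\rangle - \langle c, y^{k-1}\rangle + \langle r, x^k \rangle + \langle c, y^k\rangle = \langle r, x\rangle = \sum_i r_i \log(r_i / r_i(A^{(k-1)}))$. Since $\sum_i r_i = \sum_i r_i(A^{(k-1)}) = 1$, this last quantity is exactly $\KL(r \| r(A^{(k-1)}))$, and because $y$ was not updated, $\KL(c \| c(A^{(k-1)})) $ contributes nothing extra here — but wait, the claimed identity has \emph{both} KL terms. I would resolve this by noting that after an odd (row) step the column sums $c(A^{(k-1)})$ were set equal to $c$ at step $k-2$ only up to the subsequent row rescaling; in fact one must be careful: the correct statement is that for odd $k$, $c(A^{(k-1)}) = c$ exactly fails, so the column KL term is genuinely present, and a symmetric computation for even $k$ handles the row term. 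I would therefore treat the two parities and show each update kills one marginal discrepancy while the identity records the divergence present \emph{before} the update in both marginals — checking that for odd $k$, $\KL(c\|c(A^{(k-1)})) = 0$ is \emph{not} needed because $c(A^{(k-1)}) = c$ holds as the previous step was a column step (for $k \geq 2$, step $k-1$ has the opposite parity), making that term vanish and the identity reduce to the single-term computation above; the symmetric argument gives the even case.

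The main obstacle, and the only real subtlety, is bookkeeping the normalization and the parity: one must confirm (i) $\|A^{(k)}\|_1 = 1$ for all $k$, so the exponential terms in $f$ always equal $1$ and cancel, and (ii) that at the start of step $k \geq 2$ one of the two marginals is already exactly matched (the one adjusted at step $k-1$), so that one of the two KL terms in the asserted identity is zero and the remaining term equals the linear-term difference computed above. Granting these two facts, the lemma follows from the one-line cancellation; the convexity/exact-minimization remark is a conceptual aid but not strictly required for the identity itself, only the direct computation is. I expect the write-up to be essentially the display $f(x^{k-1},y^{k-1}) - f(x^k,y^k) = \langle r - \text{(old marginal)}, x\rangle$-style calculation, rendered symmetric over parity.
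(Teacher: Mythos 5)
Your proposal is correct and follows essentially the same route as the paper: compute $f(x^{k-1},y^{k-1})-f(x^{k},y^{k})$ directly, note the exponential terms both equal $1$ (since $\|A^{(k-1)}\|_1=\|A^{(k)}\|_1=1$) and cancel, identify the linear-term difference with $\KL(r\,\|\,r(A^{(k-1)}))$, and observe that the other KL term vanishes because for $k\geq 2$ the previous step had opposite parity, so $c(A^{(k-1)})=c$ (resp.\ $r(A^{(k-1)})=r$). The mid-paragraph hesitation about whether $c(A^{(k-1)})=c$ holds is resolved correctly in your final parity argument, which is exactly the paper's use of the hypothesis $k\geq 2$.
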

\begin{proof}
Assume without loss of generality that $k$ is odd, so that $c(A^{(k-1)}) = c$ and $r(A^{(k)}) = r$. (If $k$ is even, interchange the roles of $r$ and $c$.) By definition,
\begin{align*}
f(x^{k-1}, y^{k-1}) - f(x^{k}, y^{k}) & = \sum_{ij}\big( A^{(k-1)}_{ij} - A^{(k)}_{ij} \big)+ \langle r, x^{k} - x^{k-1} \rangle + \langle c, y^{k} - y^{k-1} \rangle \\
& = \sum_{i} r_i (x^k_i - x^{k-1}_i) = \KL(r \|r(A^{(k-1)}) + \KL(c \| c(A^{(k-1)})\,,
\end{align*}
where we have used that: $\|A^{(k-1)}\|_1 = \|A^{(k)}\|_1 = 1$ and $Y^{(k)} = Y^{(k-1)}$; for all $i$, $r_i(x^k_i - x^{k-1}_i) = r_i \log \frac{r_i}{r_i(A^{(k-1)})}$; and $\KL(c \| c(A^{(k-1)})) = 0$ since $c = c(A^{(k-1)})$.
\end{proof}
The next lemma has already appeared in the literature and we defer its proof to the Appendix.
\begin{lemma}\label{lem:starting}
If $A$ is a positive matrix with $\|A\|_1 \leq s$ and smallest entry $\ell$, then
\begin{equation*}
f(x^1, y^1) - \min_{x, y \in \re} f(x, y)  \leq f(0,0) - \min_{x, y \in \re} f(x, y) \leq \log \frac{s}{\ell}\,.
\end{equation*}
\end{lemma}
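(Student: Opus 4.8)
The plan is to prove the two stated inequalities separately, carrying out the whole argument with the potential $f$ built from the \emph{normalized} matrix $A^{(0)}=A/\|A\|_1$ used in Algorithm~\ref{alg:sinkhorn}. Since $\|A\|_1\le s$ and every entry of $A$ is at least $\ell$, every entry of $A^{(0)}$ is at least $\ell/s$, so it suffices to show $f(x^1,y^1)\le f(0,0)$ and $f(0,0)-\min_{x,y}f\le\log\big(1/\min_{ij}A^{(0)}_{ij}\big)\le\log(s/\ell)$.

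For the first inequality, the point is that a single Sinkhorn step is exactly \emph{exact block-coordinate minimization} of $f$. With $x^0=y^0=\0$, the first (odd) step optimizes the $x$-block: the map $x\mapsto f(x,\0)=\sum_i e^{x_i}r_i(A^{(0)})-\langle r,x\rangle$ is separable and strictly convex, and setting $\partial_{x_i}f=0$ yields $x_i=\log\big(r_i/r_i(A^{(0)})\big)$, which is precisely $x^1$ (while $y^1=y^0$). Hence $f(x^1,y^1)=\min_x f(x,\0)\le f(\0,\0)=f(0,0)$. (Each Sinkhorn step in fact minimizes $f$ over one block of coordinates and so never increases $f$ --- this is the mechanism behind Lemma~\ref{lem:improvement} --- but only the first step is needed here.)

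For the second inequality, first $f(0,0)=\sum_{ij}A^{(0)}_{ij}=\|A^{(0)}\|_1=1$. Next, by Lemma~\ref{lem:penalized_is_sinkhorn} and the discussion following it, $f$ attains its minimum at a point $(x^*,y^*)$ with $A^*:=\diag(e^{x^*})A^{(0)}\diag(e^{y^*})=\sink(A^{(0)},\U)\in\U$, so $\|A^*\|_1=1$ and $r(A^*)=r$, $c(A^*)=c$. Using $r(A^*)=r$, $c(A^*)=c$ to rewrite $\langle r,x^*\rangle+\langle c,y^*\rangle=\sum_{ij}A^*_{ij}(x^*_i+y^*_j)$, together with the first-order relation $x^*_i+y^*_j=\log\big(A^*_{ij}/A^{(0)}_{ij}\big)$, gives
\[
\min_{x,y}f(x,y)=f(x^*,y^*)=\|A^*\|_1-\sum_{ij}A^*_{ij}\log\frac{A^*_{ij}}{A^{(0)}_{ij}}=1-\KL\big(A^*\,\big\|\,A^{(0)}\big).
\]
It remains to bound $\KL(A^*\|A^{(0)})\le\log(s/\ell)$: writing it as $\sum_{ij}A^*_{ij}\log A^*_{ij}-\sum_{ij}A^*_{ij}\log A^{(0)}_{ij}$, the first sum is $\le0$ because $t\log t\le0$ for $t\in[0,1]$, and the second is $\ge\big(\log\min_{ij}A^{(0)}_{ij}\big)\sum_{ij}A^*_{ij}=\log\min_{ij}A^{(0)}_{ij}\ge\log(\ell/s)$. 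Hence $\min_{x,y}f\ge1-\log(s/\ell)$, and combining with $f(0,0)=1$ yields $f(0,0)-\min_{x,y}f\le\log(s/\ell)$.

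I do not expect a serious obstacle here --- the argument is short --- but the one place to be careful is the normalization: the manipulations above treat $\KL(A^*\|A^{(0)})$ as a genuine Kullback--Leibler divergence between probability matrices (so that $\sum_{ij}A^*_{ij}\log A^*_{ij}\le0$ and the entrywise lower bound on $A^{(0)}$ can be applied cleanly), which is exactly why one passes to $A^{(0)}=A/\|A\|_1$ at the outset; the hypotheses $\|A\|_1\le s$ and $\min_{ij}A_{ij}\ge\ell$ then enter only through $\min_{ij}A^{(0)}_{ij}\ge\ell/s$. The remaining steps are routine convexity and first-order-condition computations.
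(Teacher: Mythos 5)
Your proof is correct and takes essentially the same route as the paper's: normalize to $A^{(0)}=A/\|A\|_1$, use that the Sinkhorn scaling vectors $(x^*,y^*)$ of $A^{(0)}$ minimize $f$, and exploit the entrywise lower bound $\ell/s$. The only cosmetic differences are that the paper obtains $f(x^1,y^1)\le f(0,0)$ by identifying the decrease with $\KL(A^{(1)}\|A^{(0)})\ge 0$ rather than via exact block-coordinate minimization, and bounds $\langle r,x^*\rangle+\langle c,y^*\rangle$ by the pointwise estimate $x^*_i+y^*_j\le\log(s/\ell)$ averaged against $r$ and $c$ rather than through your equivalent $\KL(A^*\|A^{(0)})$ reformulation.
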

\begin{lemma}[Pinsker's Inequality]\label{lem:pinsker}
For any probability measures $p$ and $q$, $\|p - q\|_1 \leq \sqrt{2 \KL(p \| q)}$.
\end{lemma}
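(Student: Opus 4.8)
The plan is the classical two-step argument for Pinsker's inequality: first reduce the general inequality to the case of two-point (Bernoulli) distributions via a ``lumping''/data-processing argument, and then settle the two-point case by an elementary single-variable calculus estimate.

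For the reduction, I would fix the set $A = \{i \in [n] : p_i \geq q_i\}$ on which $p$ dominates $q$, and write $p(A) := \sum_{i \in A} p_i$, $q(A) := \sum_{i \in A} q_i$. Since $p$ and $q$ each sum to one, a short computation gives $\|p - q\|_1 = 2\big(p(A) - q(A)\big)$, so the left-hand side depends on $p, q$ only through the coarsened Bernoulli distributions $\bar p = (p(A), 1 - p(A))$ and $\bar q = (q(A), 1 - q(A))$. On the KL side, the log-sum inequality $\sum_j a_j \log\frac{a_j}{b_j} \geq \big(\sum_j a_j\big)\log\frac{\sum_j a_j}{\sum_j b_j}$ — equivalently, convexity of $(x,y) \mapsto x\log(x/y)$ — applied separately to the blocks $A$ and $[n]\setminus A$ yields $\KL(p\|q) \geq \KL(\bar p\|\bar q)$. (If $q(A) \in \{0,1\}$ while $p(A)$ is not, then $\KL(p\|q) = +\infty$ and there is nothing to prove.) Hence it suffices to prove the inequality when $p$ and $q$ are supported on two points.

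For the two-point case, writing $a = p(A) \geq b = q(A)$, the claim reduces to $2(a-b)^2 \leq a\log\frac ab + (1-a)\log\frac{1-a}{1-b}$. I would prove this by fixing $a$, letting $\varphi(b)$ be the right-hand side minus the left-hand side for $b \in (0,1)$, and observing that $\varphi(a) = 0$ while
$$\varphi'(b) = -\frac ab + \frac{1-a}{1-b} + 4(a-b) = (a-b)\Big(4 - \frac{1}{b(1-b)}\Big).$$
Since $b(1-b) \leq \tfrac14$ on $(0,1)$, the second factor is nonpositive, so $\varphi$ decreases up to $b=a$ and increases afterwards; thus $\varphi \geq \varphi(a) = 0$, and continuity handles the endpoints $b \in \{0,1\}$. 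Combining with the reduction gives $\|p-q\|_1^2 \leq 2\KL(p\|q)$.

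There is no serious obstacle here, as this is a classical fact; the only step that genuinely needs care is the reduction in the second paragraph, where one must invoke the log-sum inequality correctly and dispose of the degenerate cases in which $\KL(p\|q)$ is infinite. Everything after that is a routine derivative computation and a sign check.
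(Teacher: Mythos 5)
Your proof is correct. Note, however, that the paper itself offers no proof of Lemma~\ref{lem:pinsker}: it is stated as a classical fact (Pinsker's inequality) and, unlike Lemmas~\ref{lem:starting}, \ref{lem:greedy_improvement}, \ref{lem:pinsker_extension} and \ref{lem:roundtrans}, it is not among the proofs supplied in the appendix, so there is no in-paper argument to compare against. What you give is the standard textbook (Csisz\'ar--Kemperman--Pinsker) argument: the reduction to the Bernoulli case via the set $A=\{i: p_i\ge q_i\}$, the identity $\|p-q\|_1=2(p(A)-q(A))$, and the log-sum inequality $\KL(p\|q)\ge \KL(\bar p\|\bar q)$ are all handled correctly, and your algebra in the two-point case checks out: indeed $-\tfrac ab+\tfrac{1-a}{1-b}=\tfrac{b-a}{b(1-b)}$, so $\varphi'(b)=(a-b)\bigl(4-\tfrac{1}{b(1-b)}\bigr)$ with the second factor nonpositive since $b(1-b)\le \tfrac14$, giving $\varphi\ge\varphi(a)=0$ and hence $2(a-b)^2\le\KL(\bar p\|\bar q)$, which combined with the reduction yields $\|p-q\|_1^2\le 2\,\KL(p\|q)$. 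The degenerate cases (infinite divergence, endpoint values of $a$ or $b$, with the convention $0\log 0=0$) are disposed of as you indicate, so the proposal is a complete and correct proof of the lemma as used in the paper, with the right constant for the subsequent application in Theorem~\ref{thm:sinkhorn}.
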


\begin{proof}[Proof of Theorem~\ref{thm:sinkhorn}]
Let $k^*$ be the first iteration such that $\|r(A^{(k^*)}) - r\|_1 + \|c(A^{(k^*)}) - c\|_1 \leq \eps'$.
Pinsker's inequality implies that for any $k<k^*$, we have
\begin{equation*}
\eps'^2 < (\|r(A^{(k)}) - r\|_1 + \|c(A^{(k)}) - c\|_1)^2 \leq 4 (\KL(r \| r(A^{(k)}) + \KL(c \| c(A^{(k)}))\,,
\end{equation*}
so Lemmas~\ref{lem:improvement} and~\ref{lem:starting} imply that we terminate in $k^*\le 4\eps'^{-2} \log (s/\ell)$ steps, as claimed.
\end{proof}

\subsection{Greedy Sinkhorn}\label{subsec:greedy}
In addition to a new analysis of \sinkhorn, we propose a new algorithm \greedy{} which enjoys the same convergence guarantee but performs better in practice. 
Instead of performing alternating updates of \textit{all} rows and columns of $A$, the \greedy{} algorithm updates only a \textit{single} row or column at each step.
Thus \greedy{} updates only $O(n)$ entries of $A$ per iteration, rather than $O(n^2)$.

In this respect, \greedy{} is similar to the stochastic algorithm for Sinkhorn projection proposed by~\cite{GenCutPey16}.
There is a natural interpretation of both algorithms as coordinate descent algorithms in the dual space corresponding to row/column violations.
Nevertheless, our algorithm differs from theirs in several key ways. Instead of choosing a row or column to update randomly, \greedy{} chooses the best row or column to update greedily. Additionally, \greedy{} does an exact line search on the coordinate in question since there is a simple closed form for the optimum, whereas the algorithm proposed by~\cite{GenCutPey16} updates in the direction of the average gradient. Our experiments establish that \greedy{} performs better in practice; more details appear in the Appendix.

We emphasize that our algorithm is an extremely natural modification of \sinkhorn{}, and greedy algorithms for the scaling problem have been proposed before, though these do not come with with explicit near-linear time guarantees~\cite{ParLan82}. However, whereas previous analyses of \sinkhorn{} cannot be modified to extract any meaningful rates of convergence for greedy algorithms, our new analysis of \sinkhorn{} from Section~\ref{sec:analysis-sinkhorn} applies to \greedy{} with only trivial modifications.

\begin{wrapfigure}[13]{R}{0.5\textwidth}
\begin{minipage}[t]{\linewidth}
\vspace{-.8cm}
\begin{algorithm}[H]
\caption{$\greedy(A, \U, \eps')$} \label{alg:greedy}
\begin{algorithmic}[1]
\STATE $A^{(0)} \leftarrow A/\|A\|_1$, $x \leftarrow \0$, $y \leftarrow \0$.
\STATE $A \leftarrow A^{(0)}$
\WHILE{$\dist(A,\cU_{r, c}) > \eps$}
\STATE $I \leftarrow \argmax_i \rho(r_i, r_i(A))$
\STATE $J \leftarrow \argmax_j \rho(c_j, c_j(A))$
\IF{$\rho(r_I, r_I(A)) > \rho(c_J, c_J(A))$}
\STATE $x_I \leftarrow x_I + \log \frac{r_I}{r_I(A)}$
\ELSE
\STATE $y_J \leftarrow y_J + \log \frac{c_J}{c_J(A)}$
\ENDIF
\STATE $A \leftarrow \diag(\exp(x))A^{(0)}\diag(\exp(y))$
\ENDWHILE
\STATE Output $B \leftarrow A$
\end{algorithmic}
\end{algorithm}
\end{minipage}
\end{wrapfigure}

Pseudocode for \greedy{} appears in Algorithm~\ref{alg:greedy}. We define $\dist(A,\cU_{r, c}) = \|r(A) - r\|_1 + \|c(A) - c\|_1$ and define the distance function $\rho: \re_+ \times \re_+ \to [0, +\infty]$ by
\begin{equation*}
\rho(a, b) = b - a + a \log \frac a b\,.
\end{equation*}
The choice of $\rho$ is justified by its appearance in Lemma~\ref{lem:greedy_improvement}, below.
While $\rho$ is not a metric, it is easy to see that $\rho$ is nonnegative and satisfies $\rho(a, b) = 0$ iff $a = b$.

We note that after $r(A)$ and $c(A)$ are computed once at the beginning of the algorithm, \greedy{} can easily be implemented such that each iteration runs in only $O(n)$ time.

\begin{thm}\label{thm:greedy}
The algorithm \greedy{} outputs a matrix $B$ satisfying $\|r(B) - r\|_1 + \|c(B) - c\|_1 \leq \eps'$ in
$
O(n(\eps')^{-2}\log (s/\ell))$
iterations, where  $s = \sum_{ij} A_{ij}$ and $\ell = \min_{ij} A_{ij}$. Since each iteration takes $O(n)$ time, such a matrix can be found in $O(n^2(\eps')^{-2} \log (s/\ell))$ time.
\end{thm}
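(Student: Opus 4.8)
\medskip

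My plan is to mirror the analysis of Theorem~\ref{thm:sinkhorn} essentially verbatim, replacing the ``update all rows/columns'' step by the greedy single-coordinate update, and to pay for this localization with exactly one factor of $n$. The central object is again the potential $f(x,y) = \sum_{ij} A_{ij} e^{x_i+y_j} - \langle r,x\rangle - \langle c,y\rangle$, and the first thing I would do is record a per-iteration analogue of Lemma~\ref{lem:improvement}: if at some iteration we update coordinate $I$ of the row scaling (the column case being symmetric), setting $x_I \leftarrow x_I + \log\frac{r_I}{r_I(A)}$, then the decrease in $f$ equals exactly $\rho(r_I, r_I(A)) = r_I(A) - r_I + r_I\log\frac{r_I}{r_I(A)}$. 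This is a one-line computation from the definition of $f$, using that only the $I$th row of $A$ changes and that after the update $r_I(A^{\mathrm{new}}) = r_I$; this is the content of the ``Lemma~\ref{lem:greedy_improvement}'' the text refers to, and it is why $\rho$ is defined the way it is. Note $\rho(a,b) = b - a + a\log(a/b) \ge 0$ and $\rho$ is precisely the $i$th term of $\KL(r\|r(A))$ plus the $i$th term of $\|r(A)\|_1 - \|r\|_1 = 0$; the key fact I will use is that $\sum_i \rho(r_i, r_i(A)) = \KL(r \| r(A))$ when $\|r(A)\|_1 = 1$, and likewise for columns.

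\medskip

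The second step is the greedy gain lower bound. Since the algorithm picks the coordinate maximizing $\rho$, the decrease in $f$ at each step is at least $\frac{1}{2n}\big(\sum_i \rho(r_i, r_i(A)) + \sum_j \rho(c_j, c_j(A))\big) = \frac{1}{2n}\big(\KL(r\|r(A)) + \KL(c\|c(A))\big)$, because the chosen $\rho$-value dominates the average of all $2n$ row/column $\rho$-values. (One small bookkeeping point: after a row update $c(A)$ changes, so $\|c(A)\|_1$ need not be $1$; but $\|A\|_1$ stays at its post-update value and one checks $\sum_i \rho(r_i, r_i(A)) = \KL(r\|r(A)) + (\|r(A)\|_1 - 1)$ — since $\|A^{(0)}\|_1 = 1$ and each update renormalizes one margin exactly, $\|A\|_1$ oscillates around $1$ in a controlled way; the cleanest fix is to observe $\|A\|_1 \le 1$ always holds because each step multiplies a row/column by a factor $\le$ the one needed to hit the target from above... \emph{}) I would handle this by noting that the identity $\sum_i \rho(r_i, r_i(A)) = \KL(r\|r(A)) + \|r(A)\|_1 - 1$ only helps us — when $\|r(A)\|_1 \neq 1$ the left side is \emph{larger} than the $\KL$ — and combine with Pinsker (Lemma~\ref{lem:pinsker}) applied to the normalized margins. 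So before termination, $\|r(A)-r\|_1 + \|c(A)-c\|_1 > \eps'$ forces (via Pinsker on each) $\KL(r\|r(A)) + \KL(c\|c(A)) \ge \frac{1}{4}(\|r(A)-r\|_1+\|c(A)-c\|_1)^2 > \eps'^2/4$... actually $\frac{1}{2}(a^2+b^2) \ge \frac14(a+b)^2$, giving the decrease $\ge \frac{\eps'^2}{8n}$ per step. Combined with Lemma~\ref{lem:starting}, which still applies since $f(0,0) - \min f \le \log(s/\ell)$ is a statement about the starting point unchanged by which algorithm we run, the number of iterations is at most $\frac{8n\log(s/\ell)}{\eps'^2} = O(n\eps'^{-2}\log(s/\ell))$. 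The $O(n)$ per-iteration cost is justified in the text (maintain $r(A)$, $c(A)$ incrementally; only one row or column of $A$ changes, affecting $n$ entries of the margins), yielding the claimed $O(n^2\eps'^{-2}\log(s/\ell))$ total.

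\medskip

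The main obstacle, and the only place requiring genuine care rather than transcription, is the normalization bookkeeping: in \sinkhorn{} the invariant $\|A^{(k)}\|_1 = 1$ was used to turn $\sum_i \rho(r_i,r_i(A))$ cleanly into a $\KL$, but \greedy{} updates one margin at a time so $\|A\|_1$ drifts. I expect the resolution to be that the drift is always favorable: $\sum_i \rho(r_i, r_i(A)) = \KL_{\mathrm{gen}}(r\|r(A)) \ge \KL(r\|\widehat{r(A)})$ where $\widehat{r(A)}$ is the normalization of $r(A)$, because the generalized KL divergence $\sum_i (b_i - a_i + a_i\log\frac{a_i}{b_i})$ with $\sum a_i = 1$ dominates the ordinary KL to the renormalization (this is a standard consequence of $\log$ concavity / Gibbs' inequality). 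Once that inequality is in hand, Pinsker applies to $\widehat{r(A)}$, the termination test is in terms of $\|r(A) - r\|_1$ with $\|r\|_1 = 1$, and a triangle-inequality argument relating $\|r(A)-r\|_1$ to $\|\widehat{r(A)} - r\|_1 + |\,\|r(A)\|_1 - 1|$ closes the small gap — but in fact, since a row update sets $r_I(A) = r_I$ exactly and all other rows are untouched, $\|r(A)\|_1$ after the update differs from before by the single corrected coordinate, and one sees $\|r(A)\|_1$ stays within $[1-\eps', 1+\eps']$ throughout, so all these corrections are $O(\eps')$ and absorbed into constants. I would present the clean version: prove the per-step decrease identity, invoke the generalized-KL $\ge$ KL fact as a one-line lemma, apply Pinsker, and conclude.
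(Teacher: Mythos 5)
Your skeleton matches the paper's proof: the per-step decrease of the potential equals $\rho(r_I,r_I(A'))$ (resp.\ $\rho(c_J,c_J(A'))$), the greedy choice beats the average of the $2n$ row/column values, and Lemma~\ref{lem:starting} bounds the initial potential gap by $\log(s/\ell)$, so the iteration count is $O(n(\eps')^{-2}\log(s/\ell))$ once one shows a per-step progress of order $\eps'^2/n$. The genuine gap is in the normalization bookkeeping, exactly the point you flag as the only delicate one. Your resolution rests on the claim that $\|A\|_1=\|r(A)\|_1$ stays in $[1-\eps',1+\eps']$ throughout, and that claim is false: a greedy update rescales one row to hit its target sum and leaves all other rows untouched, so $\|A\|_1$ can jump by an amount of order $1$ in a single step. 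Concretely, with $n=2$, $A^{(0)}$ having rows $(0.45,0.45)$ and $(0.05,0.05)$, $c=(0.5,0.5)$ and $r=(0.1,0.9)$, the column errors vanish, greedy multiplies row $2$ by $9$, and $\|A\|_1$ becomes $1.8$. Moreover your inequality $\rho(r,r(A))\ge \KL(r\,\|\,\widehat{r(A)})$, while correct, discards precisely the term $s-1-\log s$ (with $s=\|r(A)\|_1$) that controls the mass drift; since before termination the error $\|r(A)-r\|_1+\|c(A)-c\|_1>\eps'$ can be carried largely by $|s-1|$ rather than by $\|\widehat{r(A)}-r\|_1$, the ``clean version'' you outline (generalized-KL $\ge$ KL, then Pinsker on the normalized marginals) does not by itself yield the required lower bound on per-step progress.

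The missing ingredient is the paper's Lemma~\ref{lem:pinsker_extension} together with a case split. Keep both terms of the decomposition $\rho(\alpha,\beta)=(s-1-\log s)+\KL(\alpha\,\|\,\bar\beta)$; when $\rho(\alpha,\beta)\le 1$ one has $s-1-\log s\ge (s-1)^2/5$, so combining with Pinsker and the triangle inequality $\|\alpha-\beta\|_1\le |s-1|+\|\alpha-\bar\beta\|_1$ gives $\|\alpha-\beta\|_1\le\sqrt{7\rho(\alpha,\beta)}$, which converts the termination condition into $\rho(r,r(A))+\rho(c,c(A))\gtrsim \eps'^2$; when instead $\rho(r,r(A))$ or $\rho(c,c(A))$ exceeds $1$, the greedy step already decreases $f$ by at least $1/(2n)\ge \eps'^2/(28n)$, so no quadratic bound is needed in that regime. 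With this control of $|s-1|$ by $\rho$ (or the equivalent case analysis) your argument closes and coincides with the paper's proof; without it, the step from ``$\ell_1$ error exceeds $\eps'$'' to ``progress of order $\eps'^2/n$'' is unjustified.
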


The analysis requires the following lemma, which is an easy modification of Lemma~\ref{lem:improvement}.
\begin{lemma}\label{lem:greedy_improvement}
Let $A'$ and $A''$ be successive iterates of \greedy{}, with corresponding scaling vectors $(x', y')$ and $(x'', y'')$.
If $A''$ was obtained from $A'$ by updating row $I$, then
\begin{equation*}
f(x', y') - f(x'', y'') = \rho(r_I, r_I(A'))\,,
\end{equation*}
and if it was obtained by updating column $J$, then
\begin{equation*}
f(x', y') - f(x'', y'') = \rho(c_J, c_J(A'))\,.
\end{equation*}
\end{lemma}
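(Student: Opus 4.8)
The plan is to rerun the proof of Lemma~\ref{lem:improvement} almost verbatim, exploiting the fact that a single step of \greedy{} alters exactly one scaling coordinate. Assume without loss of generality that $A''$ was obtained from $A'$ by updating row $I$ (the column case is identical after swapping the roles of $r$ and $c$). Writing $A^{(0)}$ for the normalized base matrix used in Algorithm~\ref{alg:greedy}, so that $A'_{ij} = A^{(0)}_{ij} e^{x'_i + y'_j}$ and $A''_{ij} = A^{(0)}_{ij} e^{x''_i + y''_j}$, the update rule gives $x''_I = x'_I + \log\frac{r_I}{r_I(A')}$, $x''_i = x'_i$ for $i \neq I$, and $y'' = y'$. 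Expanding the definition of $f$,
\begin{equation*}
f(x', y') - f(x'', y'') = \sum_{ij}\big(A'_{ij} - A''_{ij}\big) + \langle r, x'' - x' \rangle + \langle c, y'' - y' \rangle\,.
\end{equation*}

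Next I would evaluate the three terms. Since $y'' = y'$, the last inner product vanishes; since $x''$ and $x'$ agree outside coordinate $I$, the middle term equals $r_I(x''_I - x'_I) = r_I\log\frac{r_I}{r_I(A')}$. In the first sum every term with $i \neq I$ cancels, so it collapses to $\sum_j(A'_{Ij} - A''_{Ij})$; factoring $A''_{Ij} = A'_{Ij}\, e^{x''_I - x'_I} = A'_{Ij}\,\tfrac{r_I}{r_I(A')}$ and using $\sum_j A'_{Ij} = r_I(A')$ yields $\sum_j(A'_{Ij} - A''_{Ij}) = r_I(A') - r_I$. Adding the pieces,
\begin{equation*}
f(x', y') - f(x'', y'') = r_I(A') - r_I + r_I\log\frac{r_I}{r_I(A')} = \rho(r_I, r_I(A'))\,,
\end{equation*}
which is exactly the stated identity; the column update produces $\rho(c_J, c_J(A'))$ by the same computation.

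There is essentially no obstacle at the level of this lemma — it is a one-coordinate specialization of Lemma~\ref{lem:improvement}. The only thing to keep straight is which matrix $f$ is taken with respect to (the $\ell_1$-normalized $A^{(0)}$), together with the observation that, unlike a full row/column sweep in \sinkhorn, a single \greedy{} update does \emph{not} preserve $\|A\|_1 = 1$; but $\|A\|_1$ never enters the identity above, so this causes no trouble. The genuine work of the section is downstream, in the proof of Theorem~\ref{thm:greedy}: combining this lemma with Lemma~\ref{lem:starting} and Pinsker's inequality as in the proof of Theorem~\ref{thm:sinkhorn}, one must show that the greedily chosen coordinate accounts for a $\Omega(1/n)$ fraction of the total marginal violation, so that each iteration decreases $f$ by $\Omega(\eps'^2/n)$ and hence $O(n\eps'^{-2}\log(s/\ell))$ iterations suffice — but that argument belongs to the theorem, not to this lemma.
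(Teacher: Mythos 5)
Your proof is correct and follows essentially the same route as the paper's: expand $f(x',y') - f(x'',y'')$, note that only row $I$ of the matrix and coordinate $I$ of $x$ change while $y''=y'$, and use $r_I(A'')=r_I$ together with $x''_I - x'_I = \log\frac{r_I}{r_I(A')}$ to obtain $\rho(r_I, r_I(A'))$. Your side remark that a single \greedy{} update does not preserve $\|A\|_1=1$ (unlike the full sweeps in Lemma~\ref{lem:improvement}) is accurate but immaterial here, exactly as you say.
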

We also require the following extension of Pinsker's inequality (proof in Appendix).
\begin{lemma}\label{lem:pinsker_extension}
For any  $\alpha \in \Delta_n, \beta \in \R_+^n$, define $\rho(\alpha, \beta)=\sum_i \rho(\alpha_i, \beta_i)$.
If  $\rho(\alpha, \beta) \leq 1$, then
\begin{equation*}
\|\alpha - \beta\|_1 \leq \sqrt{7 \rho(\alpha, \beta)}\,.
\end{equation*}
\end{lemma}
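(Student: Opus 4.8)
The plan is to reduce the claim to ordinary Pinsker's inequality (Lemma~\ref{lem:pinsker}) by splitting off the ``total mass defect'' of $\beta$ from its ``shape.'' If $\rho(\alpha,\beta)=+\infty$ there is nothing to prove, so assume $\beta\neq 0$, set $S:=\|\beta\|_1>0$ and $\tilde\beta:=\beta/S\in\Delta_n$. A direct computation, using $\sum_i\alpha_i=1$ and $\log(\alpha_i/\beta_i)=\log(\alpha_i/\tilde\beta_i)-\log S$, yields the key identity
\[
\rho(\alpha,\beta)=g(S)+\KL(\alpha\,\|\,\tilde\beta),\qquad g(S):=S-1-\log S .
\]
Both summands are nonnegative ($g\ge 0$ with equality iff $S=1$; $\KL\ge 0$), so the hypothesis $\rho(\alpha,\beta)\le 1$ forces in particular $g(S)\le 1$.

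Next, by the triangle inequality, Pinsker's inequality applied to the \emph{probability} vectors $\alpha,\tilde\beta$, and the elementary identity $\|\tilde\beta-\beta\|_1=\sum_i\beta_i\,|S^{-1}-1|=|S-1|$,
\[
\|\alpha-\beta\|_1\ \le\ \|\alpha-\tilde\beta\|_1+\|\tilde\beta-\beta\|_1\ \le\ \sqrt{2\,\KL(\alpha\,\|\,\tilde\beta)}+|S-1| .
\]
It therefore suffices to prove the scalar inequality
\[
g(S)\le 1 \quad\Longrightarrow\quad (S-1)^2\le 5\,g(S),
\]
since then Cauchy--Schwarz applied to the vectors $(\sqrt2,\sqrt5)$ and $\bigl(\sqrt{\KL(\alpha\|\tilde\beta)},\sqrt{g(S)}\bigr)$ gives $\sqrt{2\,\KL(\alpha\|\tilde\beta)}+\sqrt{5\,g(S)}\le\sqrt{7\bigl(\KL(\alpha\|\tilde\beta)+g(S)\bigr)}=\sqrt{7\,\rho(\alpha,\beta)}$, as desired.

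The remaining scalar estimate is the only nonroutine point, and I expect it to be the main obstacle (everything else is bookkeeping). I would prove it by studying $F(S):=5\,g(S)-(S-1)^2$ on $(0,\infty)$. One computes $F(1)=0$ and $F'(S)=(S-1)(5-2S)/S$, so $F$ is decreasing on $(0,1)$, increasing on $(1,5/2)$, and decreasing on $(5/2,\infty)$. Meanwhile $g$ is strictly decreasing on $(0,1)$ and strictly increasing on $(1,\infty)$ with $g(1)=0$ and $g\to\infty$ at both ends, so $\{S:g(S)\le 1\}$ is a closed interval $[S_-,S_+]$ with $0<S_-<1<S_+$ and $g(S_-)=g(S_+)=1$. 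On $[S_-,1]$, $F$ is decreasing, hence $F\ge F(1)=0$; on $[1,S_+]$, the only candidate minima are $S=1$ (value $0$) and the endpoint $S_+$, where $F(S_+)=5\,g(S_+)-(S_+-1)^2=5-(S_+-1)^2$. Finally $g(1+\sqrt5)=\sqrt5-\log(1+\sqrt5)>1=g(S_+)$, and since $g$ is increasing on $[1,\infty)$ this gives $S_+<1+\sqrt5$, hence $(S_+-1)^2<5$ and $F(S_+)>0$. Thus $F\ge 0$ on all of $[S_-,S_+]$, which is exactly the claimed implication and completes the proof.
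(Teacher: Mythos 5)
Your proof is correct and follows essentially the same route as the paper's: the same decomposition $\rho(\alpha,\beta)=(S-1-\log S)+\KL(\alpha\,\|\,\beta/S)$, Pinsker's inequality on the normalized vector, the same scalar bound $(S-1)^2\le 5(S-1-\log S)$ on the region where the latter is at most $1$, and a weighted Cauchy--Schwarz combination (the paper expands the square with weights $7/5$ and $7/2$, which is equivalent). The only addition is that you spell out the calculus verification of the scalar estimate, which the paper asserts with ``it can be seen''; your verification is correct.
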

\begin{proof}[Proof of Theorem~\ref{thm:greedy}]
We follow the proof of Theorem~\ref{thm:sinkhorn}. 
Since the row or column update is chosen greedily, at each step we make progress of at least
$\frac{1}{2n} (\rho(r, r(A)) + \rho(c, c(A)))$.
If $\rho(r, r(A))$ and $\rho(c, c(A))$ are both at most $1$, then under the assumption that $\|r(A) - r\|_1 + \|c(A) - c\|_1 > \eps'$, our progress is at least
\begin{equation*}
\frac{1}{2n} (\rho(r, r(A)) + \rho(c, c(A))) \geq \frac{1}{14 n} (\|r(A) - r\|_1^2 + \|c(A) - c\|_1^2) \geq \frac{1}{28 n} \eps'^2
\end{equation*}
Likewise, if either $\rho(r, r(A))$ or $\rho(c, c(A))$ is larger than $1$, our progress is at least $1/2n \geq \frac{1}{28 n} \eps'^2$.
Therefore, we terminate in at most $28 n \eps'^{-2} \log (s/\ell)$ iterations.
\end{proof}

\section{Proof of Theorem~\ref{thm:ot}}\label{sec:proof-thm}
First, we present a simple guarantee about the rounding Algorithm~\ref{alg:round}. The following lemma shows that
the $\ell_1$ distance between the input matrix $F$ and rounded matrix $G = \roundtrans(F, \U)$ is controlled by the total-variation distance between the input matrix's marginals $r(F)$ and $c(F)$ and the desired marginals $r$ and $c$.
\begin{lemma}\label{lem:roundtrans}
If $r,c \in \Delta_n$ and $F \in \re_+^{n \times n}$, then Algorithm~\ref{alg:round} takes $O(n^2)$ time to output a matrix $G \in \U$ satisfying
\[
\|G - F\|_1 \leq
2 \Big[ \left\|r(F) -r \right\|_1 +
 \left\|c(F) -c \right\|_1 \Big]
\,.
\]
\end{lemma}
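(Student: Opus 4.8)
The plan is to prove the two assertions of the lemma separately: first that $G \in \U$, then the $\ell_1$ bound, the latter via a triangle inequality along the chain $F \rightsquigarrow F' = XF \rightsquigarrow F'' = F'Y \rightsquigarrow G$. The $O(n^2)$ runtime is immediate, since every line of Algorithm~\ref{alg:round} — forming the row and column sums, the two diagonal rescalings, and the rank-one correction $\errr\errc^\top$ — costs $O(n^2)$ elementary operations, so I would dispatch that first and spend the rest on correctness.

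For membership in $\U$: since $X$ and $Y$ are diagonal with entries in $[0,1]$ and $F \geq 0$, we have $0 \leq F'' \leq F$ entrywise. By construction $r_i(F') = r_i \wedge r_i(F)$ and $c_j(F'') = c_j \wedge c_j(F')$, so $r_i(F'') \leq r_i(F') \leq r_i$ and $c_j(F'') \leq c_j$; hence $\errr = r - r(F'') \geq 0$ and $\errc = c - c(F'') \geq 0$. Because $r, c \in \Delta_n$, summing these coordinatewise inequalities gives $\|\errr\|_1 = 1 - \|F''\|_1 = \|\errc\|_1$. Using the equality $\|\errr\|_1 = \|\errc\|_1$, one checks directly that $r(G) = r(F'') + \errr = r$, $c(G) = c(F'') + \errc = c$, and $G \geq 0$, so $G \in \U$. (If $\|F''\|_1 = 1$ then $\errr = \errc = 0$, $F''$ already lies in $\U$, and we take $G = F''$.)

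For the $\ell_1$ bound I would write $\|G - F\|_1 \leq \|G - F''\|_1 + \|F'' - F'\|_1 + \|F' - F\|_1$ and estimate each term. Since $\errr, \errc \geq 0$, the first equals $\|\errr\errc^\top\|_1 / \|\errr\|_1 = \|\errc\|_1$. Since $X$ rescales row $i$ by $x_i \leq 1$, the third is $\sum_i (1 - x_i) r_i(F) = \sum_i (r_i(F) - r_i)_+ \leq \|r(F) - r\|_1$; likewise the second is $\sum_j (c_j(F') - c_j)_+$, and because $c_j(F') \leq c_j(F)$ this is at most $\sum_j (c_j(F) - c_j)_+ \leq \|c(F) - c\|_1$. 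It remains to bound $\|\errc\|_1 = 1 - \|F''\|_1$, for which I would telescope
\[
1 - \|F''\|_1 = \big(1 - \|F'\|_1\big) + \big(\|F'\|_1 - \|F''\|_1\big) = \sum_i (r_i - r_i(F))_+ + \sum_j (c_j(F') - c_j)_+ ,
\]
so that, using the two estimates just made, $\|\errc\|_1 \leq \|r(F) - r\|_1 + \|c(F) - c\|_1$. Adding the three contributions yields $\|G - F\|_1 \leq 2\|r(F) - r\|_1 + 2\|c(F) - c\|_1$, as claimed.

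The only mildly delicate point — and the step I expect to be the crux — is this last bound on $\|\errc\|_1 = 1 - \|F''\|_1$: it rests on the telescoping identity above together with the monotonicity $c_j(F') \leq c_j(F)$, which is what lets us replace $F'$ by $F$ inside the second sum and thereby express everything in terms of the original marginal errors. Everything else is routine bookkeeping, and one should also keep an eye on the degenerate case $\|\errr\|_1 = 0$ (equivalently $F'' \in \U$) so that the rank-one correction is well defined.
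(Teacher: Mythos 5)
Your proof is correct and follows essentially the same route as the paper's: both arguments account for the mass removed by the row- and column-rescaling steps (via the monotonicity $c(F') \le c(F)$) and the mass $\|\errc\|_1 = 1 - \|F''\|_1$ added back by the rank-one correction, and the feasibility argument is identical. The only difference is bookkeeping: the paper retains the exact identity $\sum_i (r_i(F)-r_i)_+ = \tfrac12\bigl[\|r(F)-r\|_1 + \|F\|_1 - 1\bigr]$ and thereby passes through the slightly sharper asymmetric bound $\|r(F)-r\|_1 + 2\|c(F)-c\|_1$ (which it reuses for the randomized rounding variant in the appendix), whereas your three-term triangle inequality with cruder positive-part estimates lands directly on the symmetric factor-$2$ bound.
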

The proof of Lemma~\ref{lem:roundtrans} is simple and left to the Appendix. (We also describe in the Appendix a randomized variant of Algorithm~\ref{alg:round} that achieves a slightly better bound than Lemma~\ref{lem:roundtrans}). We are now ready to prove Theorem~\ref{thm:ot}.
\begin{proof}[Proof of Theorem~\ref{thm:ot}]
\textsc{Error analysis.} Let $B$ be the output of \linebreak$\approxproj(A, \U, \eps')$, and let $P^* \in \argmin_{P \in \U} \langle P, C \rangle$ be an optimal solution to the original OT program.

We first show that $\langle B, C \rangle$ is not much larger than $\langle P^*, C \rangle$.
To that end, write $r' := r(B)$ and $c' := c(B)$. Since $B = X A Y$ for positive diagonal matrices $X$ and $Y$, Lemma~\ref{lem:penalized_is_sinkhorn} implies $B$ is the optimal solution to
\begin{equation}\label{eqn:prime_program}
\min_{P \in \Uprime} \langle P, C\rangle - \eta^{-1} H(P)\,.
\end{equation}
By Lemma~\ref{lem:roundtrans}, there exists a matrix $P' \in \Uprime$ such that
$$\|P' - P^*\|_1 \leq 2\left( \|r' - r\|_1 + \|c' - c\|_1\right)\,.$$ Moreover,
since $B$ is an optimal solution of~\eqref{eqn:prime_program}, we have
\begin{equation*}
\langle B, C\rangle - \eta^{-1} H(B) \leq \langle P', C\rangle - \eta^{-1} H(P')\,.
\end{equation*}
Thus, by H\"older's inequality
\begin{align}
\nonumber \langle B, C \rangle - \langle P^*, C \rangle & = \langle B, C \rangle - \langle P', C \rangle + \langle P', C \rangle  - \langle P^*, C \rangle \\
\nonumber & \leq  \eta^{-1} (H(B) - H(P')) + 2(\|r' - r\|_1 + \|c' - c\|_1)\Cinfty \\
& \leq {2\eta^{-1} \log n} + 2(\|r' - r\|_1 + \|c' - c\|_1)\Cinfty\,, \label{eqn:qpbound}
\end{align}
where we have used the fact that $0\le H(B), H(P')\le  2 \log n$.

Lemma~\ref{lem:roundtrans} implies that the output $\hat P$ of $\roundtrans(B, \U)$ satisfies the inequality $
\|B - \hat P\|_1 \leq 2\left(\|r' - r\|_1 + \|c' - c\|_1\right)$.
This fact together with~\eqref{eqn:qpbound} and  H\"older's inequality yields
\begin{equation*}
\langle \hat P, C \rangle \leq \min_{P \in \U} \langle P, C \rangle + {2 \eta^{-1}\log n} + 4 (\|r' - r\|_1 + \|c' - c\|_1)\Cinfty\,.
\end{equation*}

Applying the guarantee of $\approxproj(A, \U, \eps')$, we obtain
\begin{equation*}
\langle \hat P, C \rangle \leq \min_{P \in \U} \langle P, C \rangle + \frac{2 \log n}{\eta} + 4 \eps'\Cinfty\,.
\end{equation*}

Plugging in the values of $\eta$ and $\eps'$ prescribed in Algorithm~\ref{alg:ot} finishes the error analysis.

\textsc{Runtime analysis.}
Lemma~\ref{lem:roundtrans} shows that Step 2 of Algorithm~\ref{alg:ot} takes $O(n^2)$ time. The runtime of Step 1 is dominated by the $\approxproj(A, \U, \eps')$ subroutine. Theorems~\ref{thm:sinkhorn} and~\ref{thm:greedy} imply that both the \sinkhorn{} and \greedy{} algorithms accomplish this in $S = O(n^2(\eps')^{-2} \log \frac{s}{\ell})$ time, where $s$ is the sum of the entries of $A$ and $\ell$ is the smallest entry of $A$.
Since the matrix $C$ is nonnegative, the entries of $A$ are bounded above by $1$, thus $s \leq n^2$. The smallest entry of $A$ is $e^{-\eta \Cinfty}$, so $\log 1/\ell = \eta \Cinfty$.
We obtain $S = O(n^2 (\eps')^{-2} (\log n + \eta \Cinfty))$.
The proof is finished by plugging in the values of $\eta$ and $\eps'$ prescribed in Algorithm~\ref{alg:ot}.
\end{proof}

\section{Empirical results}\label{sec:empirical}

\begin{wrapfigure}[10]{R}{0.3\textwidth} 
\centering
\includegraphics[width=.2\textwidth]{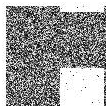}
\caption{Synthetic image.}
\label{fig:rand_images_view}
\end{wrapfigure}
Cuturi~\cite{Cut13} already gave experimental evidence that using $\sinkhorn$ to solve~\eqref{eqn:peta} outperforms state-of-the-art techniques for optimal transport. In this section, we provide strong empirical evidence that our proposed $\greedy$ algorithm significantly outperforms $\sinkhorn$.

We consider transportation between pairs of $m \times m$ grayscale images, normalized to have unit total mass. The target marginals $r$ and $c$ represent two images in a pair, and $C\in \R^{m^2\times m^2}$ is the matrix of $\ell_1$ distances between pixel locations. Therefore, we aim to compute the earth mover's distance.
\par We run experiments on two datasets: \textit{real images}, from \textsc{mnist}, and \textit{synthetic images}, as in Figure~\ref{fig:rand_images_view}.

\subsection{MNIST}\label{subsec:mnist}
We first compare the behavior of $\greedy$ and $\sinkhorn$ on real images. To that end, we choose $10$ random pairs of images from the MNIST dataset, and for each one analyze the performance of $\ot$ when using both $\greedy$ and $\sinkhorn$ for the approximate projection step. We add negligible noise $0.01$ to each background pixel with intensity $0$. Figure~\ref{fig:images_plots} paints a clear picture: $\greedy$ significantly outperforms $\sinkhorn$ both in the short and long term.

\begin{wrapfigure}[22]{R}{0.6\textwidth}
\vspace{-.5cm}
\centering
\includegraphics[width=0.5\linewidth]{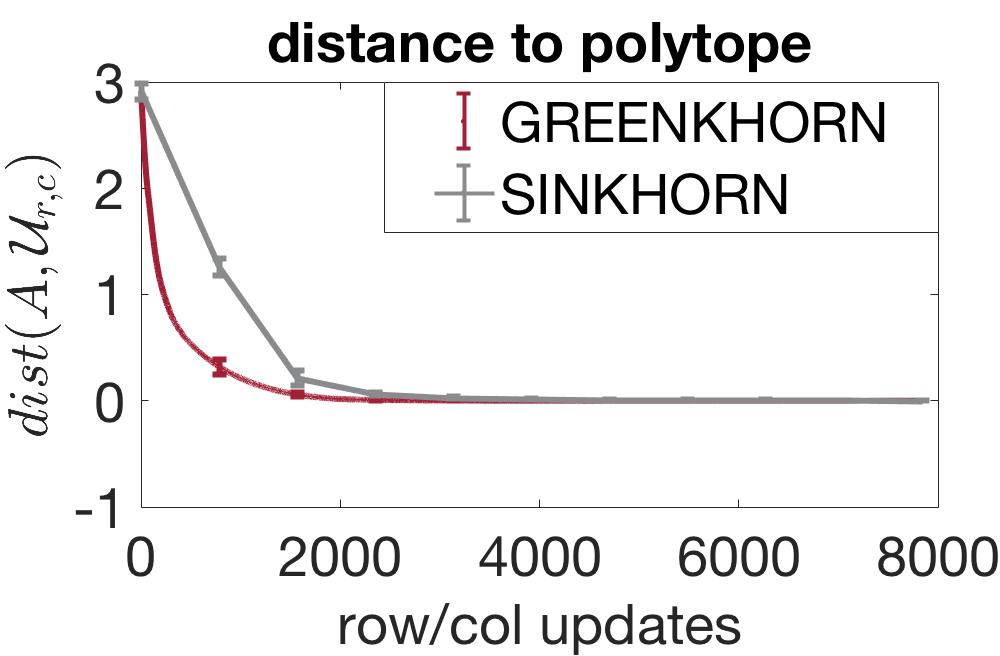}\hfill
\includegraphics[width=0.5\linewidth]{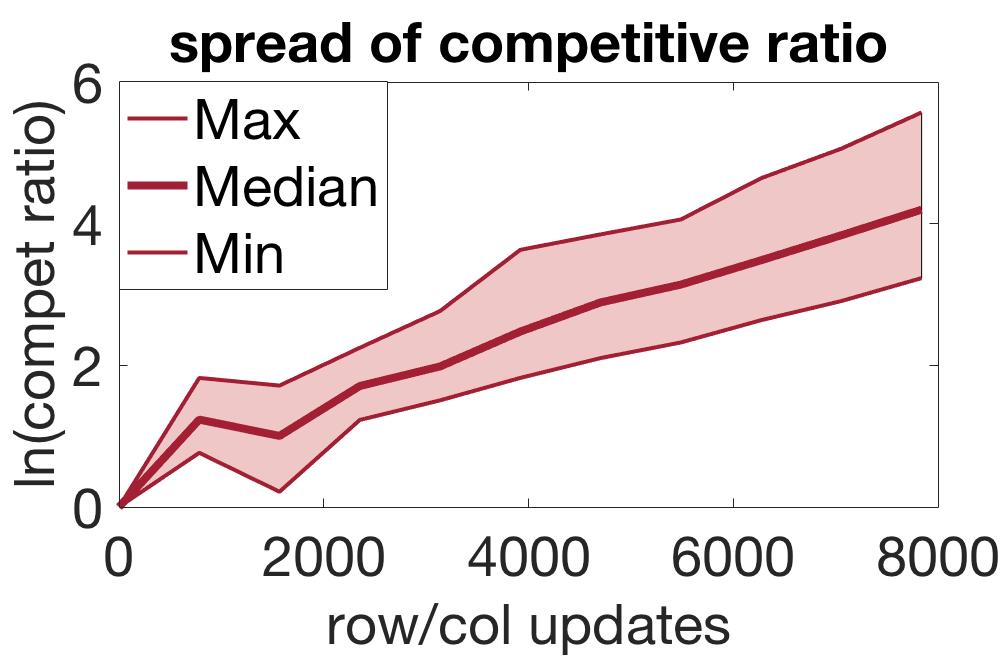}
\includegraphics[width=0.5\linewidth]{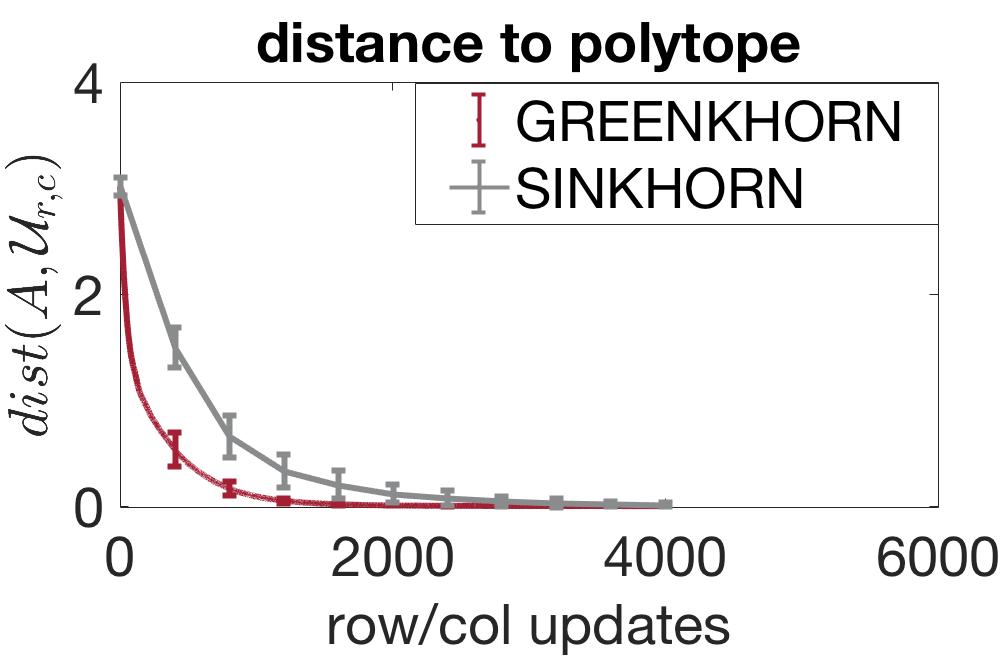}\hfill
\includegraphics[width=0.5\linewidth]{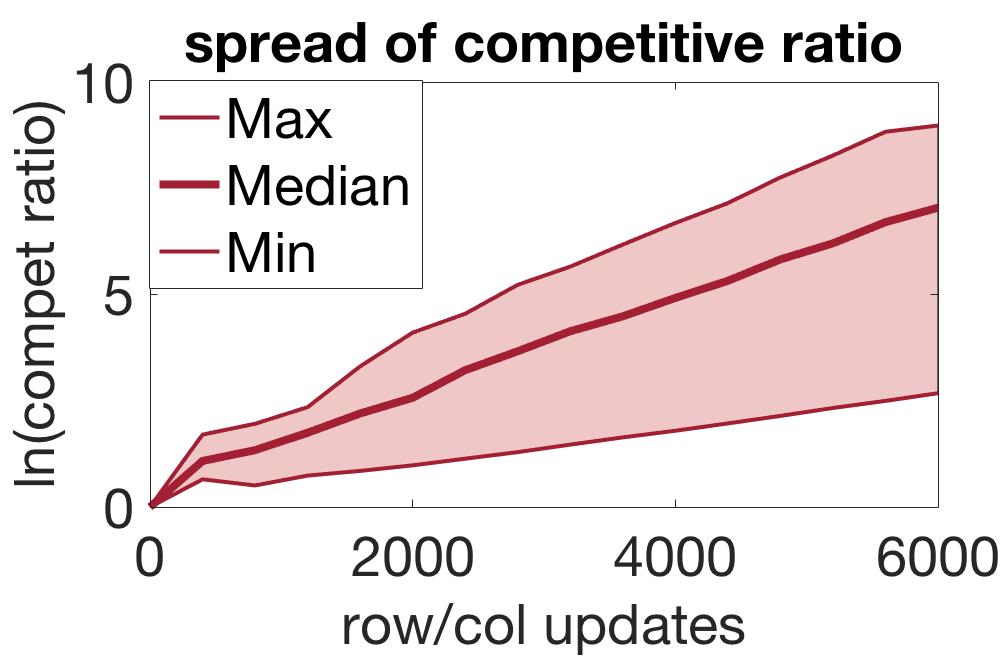}
\caption{Comparison of $\greedy$ and $\sinkhorn$ on pairs of MNIST images of dimension $28 \times 28$ (top) and random images of dimension $20 \times 20$ with $20$\% foreground (bottom). Left: distance $\dist(A, \cU_{r, c})$ to the transport polytope (average over $10$ random pairs of images). Right: maximum, median, and minimum values of the competitive ratio
 $\ln\left(\dist(A_{\sc S}, \U)/\dist(A_{\sc G}, \U)\right)$
 over $10$ runs.}
\label{fig:images_plots}
\end{wrapfigure}

\newpage
\subsection{Random images}\label{subsec:rand-images} To better understand the empirical behavior of both algorithms in a number of different regimes, we devised a synthetic and tunable framework whereby we generate images by choosing a randomly positioned ``foreground'' square in an otherwise black background. The size of this square is a tunable parameter varied between 20\%, 50\%, and 80\% of the total image's area. Intensities of background pixels are drawn uniformly from $[0,1]$; foreground pixels are  drawn uniformly from $[0,50]$. Such an image is depicted in Figure~\ref{fig:rand_images_view}, and results appear in Figure~\ref{fig:images_plots}.

We perform two other experiments with random images in Figure~\ref{fig:rand_experiments}. In the first, we vary the number of background pixels and show that $\greedy$ performs better when the number of background pixels is larger.
We conjecture that this is related to the fact that $\greedy$ only updates salient rows and columns at each step, whereas $\sinkhorn$ wastes time updating rows and columns corresponding to background pixels, which have negligible impact.
This demonstrates that $\greedy$ is a better choice especially when data is sparse, which is often the case in practice.

In the second, we consider the role of the regularization parameter $\eta$. Our analysis requires taking $\eta$ of order $\log n/\eps$, but Cuturi~\cite{Cut13} observed that in practice $\eta$ can be much smaller. Cuturi showed that \sinkhorn{} outperforms state-of-the art techniques for computing OT distance even when $\eta$ is a small constant, and Figure~\ref{fig:rand_experiments} shows that \greedy{} runs faster than \sinkhorn{} in this regime with no loss in accuracy.

\begin{figure}[ht]
\centering
\includegraphics[height=96pt]{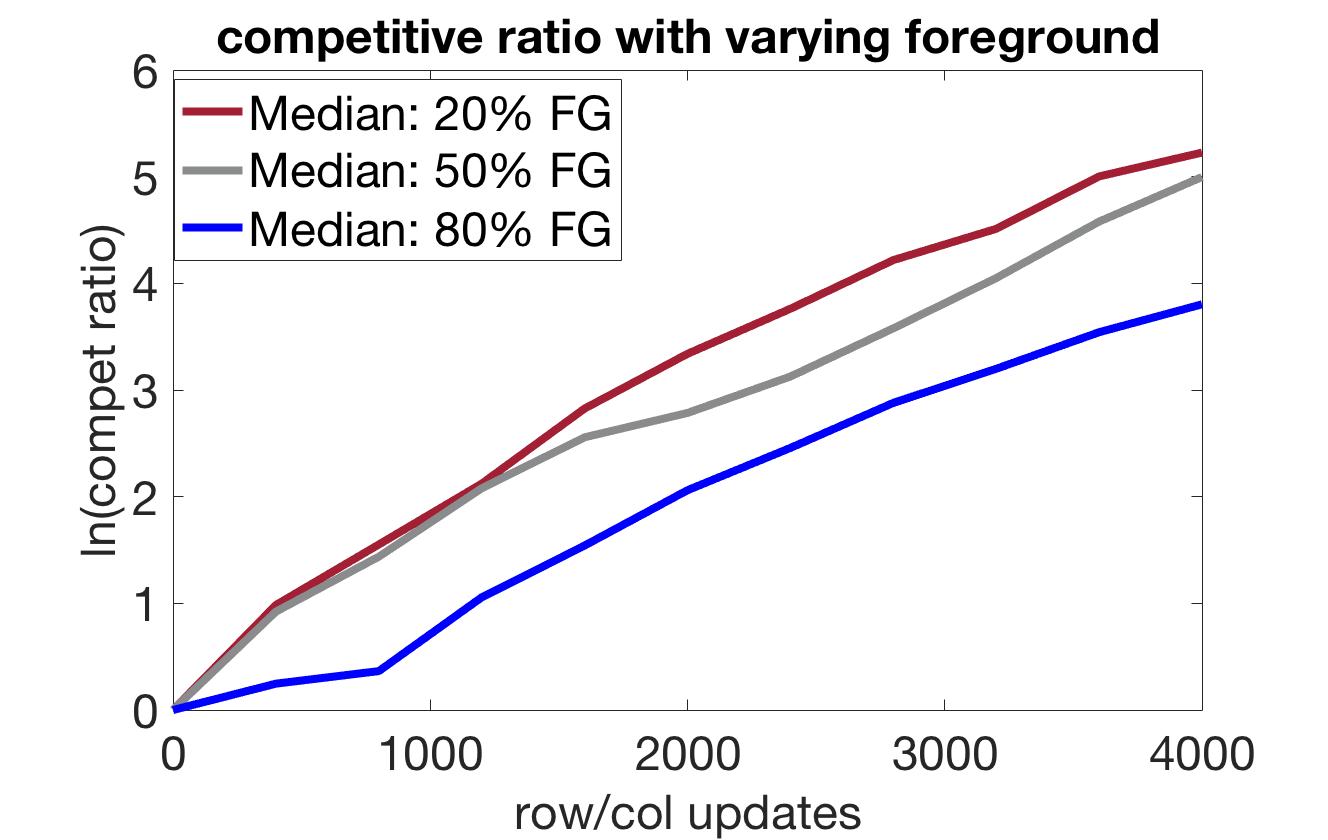}\hfill
\includegraphics[height=96pt]{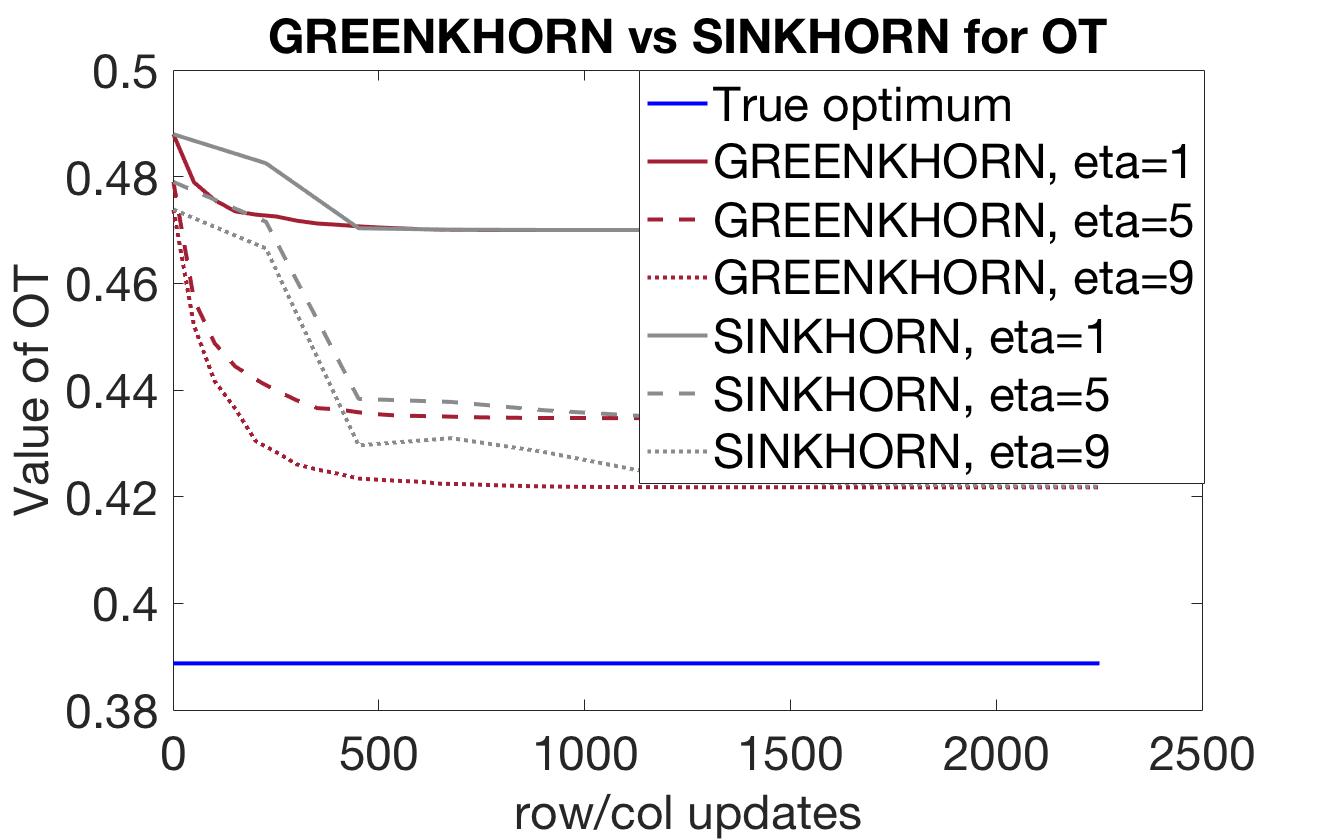}
\caption{Left: Comparison of median competitive ratio for random images containing $20$\%, $50$\%, and $80$\% foreground. Right: Performance of \greedy{} and \sinkhorn{} for small values of $\eta$.
}
\label{fig:rand_experiments}
\end{figure}

\appendix
\section{Omitted proofs}
\subsection{Proof of Lemma~\ref{lem:starting}}
The proof of the first inequality is similar to the proof of Lemma~\ref{lem:improvement}:
\begin{align*}
f(0, 0) - f(x^{(1)}, y^{(1)}) & = \langle r, x^{(1)} \rangle  + \langle c, y^{(1)} \rangle  = \sum_{ij} A^{(1)}_{ij} \log \frac{A^{(1)}_{ij}}{A^{(0)}_{ij}}  = \KL(A^{(1)} \| A^{(0)}) \geq 0\,,
\end{align*}
where $\KL(A^{(1)} \| A^{(0)})$ denotes the divergence between $A^{(1)}$ and $A^{(0)}$ viewed as elements of $\Delta_{n^2}$.

We now prove the second claim. Note that $A^{(0)}$ satisfies $\|A^{(0)}\|_1=1$ and has smallest entry $\ell/s$.
Since $A^{(0)}$ is positive, \cite{sinkhorn} shows that $\sink(A^{(0)})$ exists and is unique. Let $(x^*, y^*)$ be corresponding scaling factors. Then
\begin{align*}
f(0, 0) - f(x^*, y^*) & = \langle r, x^{*} \rangle  + \langle c, y^{*} \rangle\,.
\end{align*}
Now since
\begin{equation*}
A^{(0)}_{ij}e^{x^*_i + y^*_j} \leq \sum_{ij} A^{(0)}_{ij}e^{x^*_i + y^*_j} = 1\,,
\end{equation*}
we have
\begin{equation*}
x^*_i + y^*_j \leq \log \frac s \ell\,,
\end{equation*}
for all $i, j \in [n]$. Thus because $r$ and $c$ are both probability vectors,
\begin{equation*}
\langle r, x^* \rangle + \langle c, y^*\rangle \leq \log \frac s \ell\,.
\end{equation*}
\qed

\subsection{Proof of Lemma~\ref{lem:greedy_improvement}}
We prove only the case where a row was updated, since the column case is exactly the same.

By definition,
\begin{equation*}
f(x', y') - f(x'', y'') = \sum_{ij}( A'_{ij} - A''_{ij}) + \langle r, x'' - x' \rangle + \langle c, y'' - y' \rangle\,.
\end{equation*}
Observe that  $A'$ and $A''$ differ only in the $I$th row, and $x''$ and $x'$ differ only in the $I$th entry, and $y'' = y'$. Hence
\begin{align*}
f(x', y') - f(x'', y'') & = r_I(A') - r_I(A'') + r_I(x''_I - x'_I) \\
& = \rho(r_I, r_I(A'))\,,
\end{align*}
where we have used the fact that $r_I(A'') = r_I$ and $x''_I - x'_I = \log (r_I/r_I(A'))$.
\qed

\subsection{Proof of Lemma~\ref{lem:pinsker_extension}}
Let $s = \sum_i \beta_i$, and write $\bar \beta = \beta/s$.
The definition of $\rho$ implies
\begin{align*}
\rho(\alpha, \beta) & = \sum_i (\beta_i - \alpha_i) + \alpha_i \log \frac{\alpha_i}{\beta_i} \\
& = s - 1 + \sum_i \alpha_i \log \frac{\alpha_i}{s \bar \beta_i} \\
& = s - 1 -(\log s) \sum_i \alpha_i  + \KL(\alpha \| \bar \beta) \\
& = s - 1 - \log s + \KL(\alpha \| \bar \beta)\,.
\end{align*}
Note that both $s - 1 - \log s$ and $\KL(\alpha \| \bar \beta)$ are nonnegative.
If $\rho(\alpha, \beta) \leq 1$, then in particular $s - 1 - \log s \leq 1$, and it can be seen that  $s-1 - \log s \geq (s-1)^2/5$ in this range.
Applying Lemma~\ref{lem:pinsker} (Pinsker's inequality) yields
\begin{equation*}
\rho(\alpha, \beta) \geq \frac 1 5 (s-1)^2 + \frac 1 2 \|\alpha - \bar \beta\|_1^2\,.
\end{equation*}
By the triangle inequality and convexity,
\begin{equation*}
\|\alpha - \beta\|_1^2 \leq (\|\bar \beta - \beta\|_1 + \|\alpha - \bar \beta\|_1)^2 = (|s -1| + \|\alpha - \bar \beta\|_1)^2 \leq \frac 7 5 (s-1)^2 + \frac 7 2 \|\alpha - \bar \beta\|_1^2\,.
\end{equation*}
The claim follows from the above two displays.
\qed

\subsection{Proof of Lemma~\ref{lem:roundtrans}}
Let $G$ be the output of $\roundtrans(F, \U)$.
The entries of $F''$ are nonnegative, and at the end of the algorithm $\errr$ and $\errc$ are both nonnegative, with $\|\errr\|_1 = \|\errc\|_1 = 1 - \|F''\|_1$.
Therefore the entries of $G$ are nonnegative and
\[
r(G) = r(F'') + r(\errr \errc^\top/\|\errr\|_1) = r(F'') + \errr = r\,,
\]
and likewise $c(G) = c$. This establishes that $G \in \U$.

Now we prove the $\ell_1$ bound between the original matrix $F$ and $G$. Let $\Delta = \|F\|_1 - \|F''\|_1$ be the total amount of mass removed from $F$ by rescaling the rows and columns.
In the first step, we remove mass from a row of $F$ when $r_i(F) \geq r_i$, and in the second step we remove mass from a column when $c_j(F') \geq c_j$.
We therefore have
\begin{align}
\Delta &= \sum_{i=1}^n (r_i(F) - r_i)_+ + \sum_{j=1}^n (c_j(F') - c_j)_+\,. \label{eq:proof-lem-7:delta}
\end{align}
Let us analyze both of the sums in~\eqref{eq:proof-lem-7:delta}. First, a simple calculation shows
\begin{align*}
\sum_{i=1}^n (r_i(F) - r_i)_+ = \frac 1 2\Big[\|r(F)-r\|_1 + \|F\| - 1\Big]\,.
\end{align*}
Next, upper bound the second sum in~\eqref{eq:proof-lem-7:delta} using the fact that the vector $c(F)$ is entrywise larger than $c(F')$
\begin{align*}
\sum_{j=1}^n (c_j(F') - c_j)_+\
\leq
\sum_{j=1}^n (c_j(F) - c_j)_+\
\leq 
\|c(F) - c\|_1
\end{align*}

Therefore we conclude
\begin{align}
\|G-F\|_1 
& \leq \Delta + \|\errr \errc^\top\|_1/\|\errr\|_1 \nonumber \\
& = \Delta + 1 - \|F''\|_1 \nonumber \\
& = 2 \Delta + 1 - \|F\|_1 \nonumber \\
& \leq \|r(F)-r\|_1 + 2\|c(F) - c\|_1 \label{eq:proof-lem-7:bound} \\
& \leq 2 \Big[\|r(F)-r\|_1 + \|c(F) - c\|_1\Big] \nonumber
\end{align}

\par Finally, we prove the $O(n^2)$ runtime bound follows by observing that each rescaling  and computing the matrix $\errr \errc^\top/\|\errr\|_1$  both require at most $O(n^2)$ time.
\qed

\subsection{Randomized variant of rounding algorithm (Algorithm~\ref{alg:round})}
In the section, we describe a simple randomized variant of Algorithm~\ref{alg:round} that achieves a slightly better guarantee. Let us first recall the guarantee we get for Algorithm~\ref{alg:round}. By equation~\eqref{eq:proof-lem-7:bound} in the proof of Lemma~\ref{lem:roundtrans}, the $\ell_1$ difference between the original matrix $F$ and rounded matrix $G$ is upper bounded by
\begin{align*}
\|G - F\|_1 \leq \|r(F) - r\|_1 + 2\|c(F) - c\|_1\,.
\end{align*}
This asymmetry between $\|r(F) - r\|_1$ and $\|c(F) - c\|_1$ arises because Algorithm~\ref{alg:round} creates $F''$ by first removing mass from rows of $F$, and then from columns. Consider modifying Algorithm~\ref{alg:round} to create $F''$ by first removing mass from columns of $F$, and then from rows. Then a symmetrical argument gives the bound
\begin{align*}
\|G - F\|_1 \leq 2\|r(F) - r\|_1 + \|c(F) - c\|_1\,.
\end{align*}
Together the above two displays suggest the following simple randomized variant of Algorithm~\ref{alg:round}: with probability $1/2$, perform Algorithm~\ref{alg:round}; otherwise, perform the above-described column-then-row version of Algorithm~\ref{alg:round}. Combining the above two displays then gives the following improved bound for this randomized algorithm
\begin{align*}
\E\|G - F\|_1 \leq \frac{3}{2} \Big[\|r(F) - r\|_1 + \|c(F) - c\|_1 \Big]\,.
\end{align*}

\subsection{Comparison with~\cite{GenCutPey16}}
In this Section, we present an empirical comparison of the performance of \greedy{} with the stochastic algorithm proposed by~\cite{GenCutPey16}.
Their \linebreak algorithm---which we call Stochastic Sinkhorn for convenience---uses a Stochastic Averaged Gradient (SAG) algorithm to optimize a dual version of the entropic penalty program~\eqref{eqn:peta}.

We have noted in the main text that \greedy{} and Stochastic Sinkhorn both attempt to solve the scaling problem via coordinate descent in the dual problem. Stochastic Sinkhorn does so via the method proposed in~\cite{SchLerBac17}, whereas \greedy{} greedily chooses a good coordinate to update, and then leverages an explicit closed form to perform an exact line search on this coordinate. One difference between our algorithms is their starting point: \greedy{} is initialized with $A/\|A\|_1$, whereas the starting primal solution corresponding to the initialization of Stochastic Sinkhorn is the matrix obtained by first multiplying each column of $A$ by the corresponding entry of $c$ and then scaling the rows of the resulting matrix so they agree with $r$. This is equivalent to performing a full update step of \sinkhorn{} on the matrix $A \diag(c)$ at the beginning of this algorithm.
In simulations, this starting point is of better quality than the matrix $A/\|A\|_1$ which \greedy{} uses as its first iterate; however, this advantage quickly disappears. Since our goal is to compare \greedy{} and Stochastic Sinkhorn in terms of the number of required row or column updates, we also initialize \greedy{} at this point instead of at $A/\|A\|_1$ to facilitate an apples-to-apples comparison.

To compare the performance of \greedy{} with Stochastic Sinkhorn, we use an experiment on random images with $20$\% foreground pixels, as in Section~\ref{subsec:rand-images}. We initialize both algorithms with the same primal solution and used Algorithm~\ref{alg:round} to round iterates of each algorithm to the feasible polytope $\U$. Implementing Stochastic Sinkhorn requires choosing a step size, denoted by $C$ in~\cite{GenCutPey16}. That paper suggests choosing $C = 1/(Ln)$, $3/(Ln)$, or $5/(Ln)$, where $L$ is an upper bound on the Lipschitz constant of the semi-dual problem they consider.%
\footnote{In fact, they propose the step sizes $C = 1/L, 3/L, 5/L$ in the main text, but the extra factor of $n$ is present in the simulation code posted online, so we have opted to retain it in our experiments. Our experimental results indicate that without the factor of $n$, the resulting algorithm is quite unstable.}
We compare all three choices of step size with our implementation of the \greedy{} algorithm in Figure~\ref{fig:greedy_vs_stoch} with two different values of the parameter $\eta$.

\begin{figure}[ht]
\centering
\includegraphics[width=.5\textwidth]{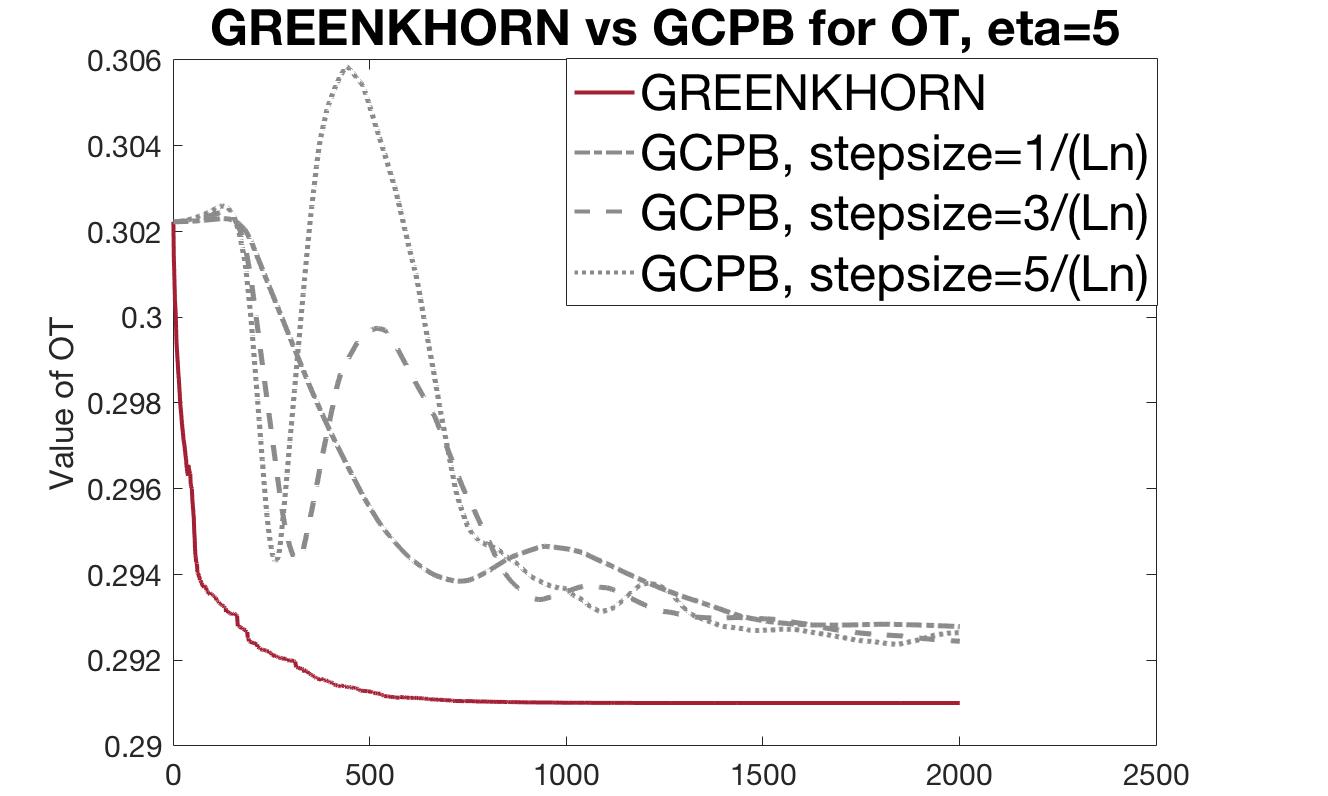}\hfill
\includegraphics[width=.5\textwidth]{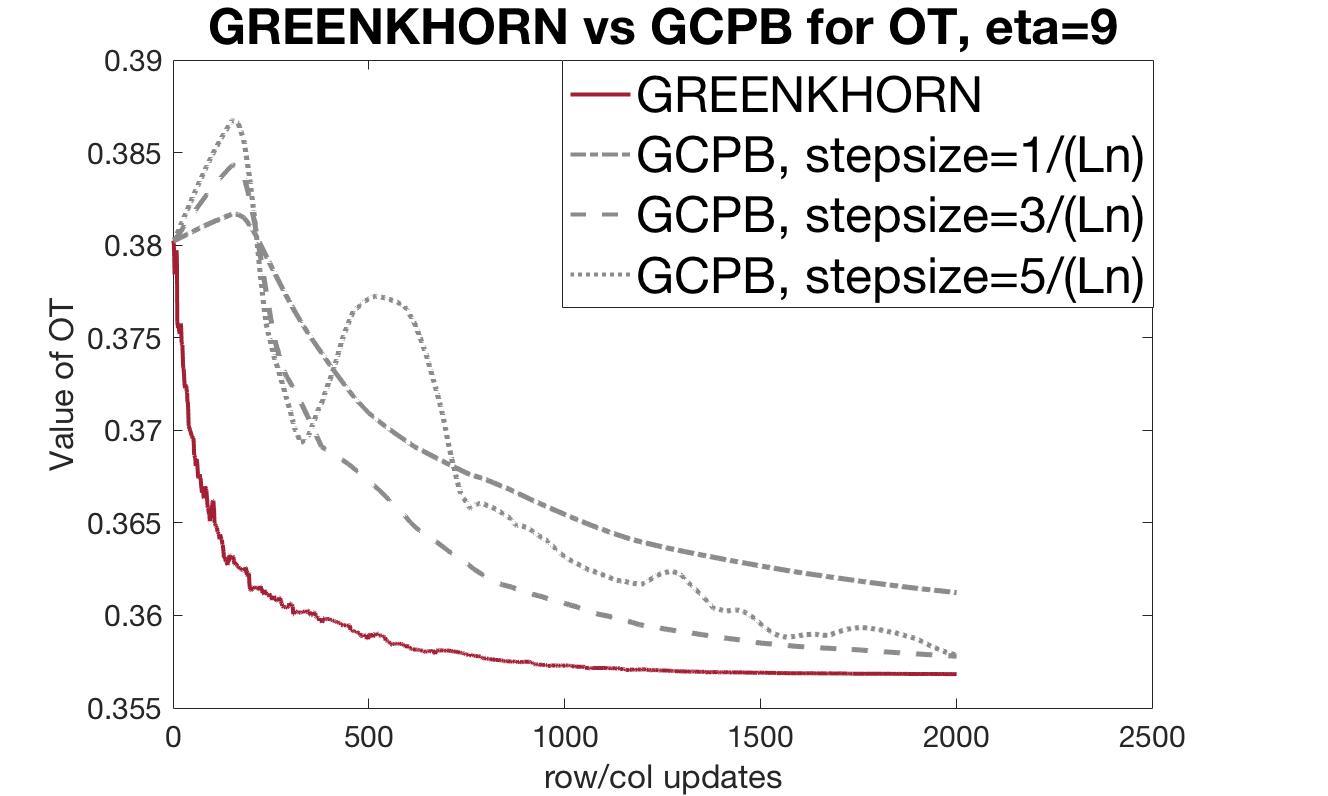}
\caption{Comparison of \greedy{} and Stochastic Sinkhorn}
\label{fig:greedy_vs_stoch}
\end{figure}

\subsection*{Acknowledgments}
We thank Michael Cohen, Adrian Vladu, Jon Kelner, Justin Solomon, and Marco Cuturi for helpful discussions. We are grateful to Pablo Parrilo for drawing our attention to the fact that \greedy{} is a coordinate descent algorithm, and to Alexandr Andoni and Inderjit Dhillon for references.

\small
\bibliographystyle{alphaabbr}
\bibliography{AltWeeRig17}
\appendix

\end{document}